% interactnlmsample.tex
% v1.03 - February 2017

\documentclass[]{interact}

\usepackage{epstopdf}% To incorporate .eps illustrations using PDFLaTeX, etc.
\usepackage{subfigure}% Support for small, `sub' figures and tables

\usepackage[numbers,sort&compress]{natbib}% Citation support using natbib.sty
\bibpunct[, ]{[}{]}{,}{n}{,}{,}% Citation support using natbib.sty
% Bibliography support using natbib.sty
\makeatletter% @ now becomes a letter
\def\NAT@def@citea{\def\@citea{\NAT@separator}}% Macro to suppress spaces between citations using natbib.sty
\makeatother% @ becomes a symbol again

\theoremstyle{plain}% Theorem-like structures provided by amsthm.sty
\newtheorem{theorem}{Theorem}[section]

\theoremstyle{definition}

\theoremstyle{remark}

\def\R{{\mathbb R}}
\def\C{{\mathbb C}}

\def\cl{{C}\!\ell}
\def\U{{\rm U}}
\def\T{{\rm T}}
\def\G{{\rm G}}

\def\diag{{\rm diag}}

\def\SO{{\rm SO}}
\def\GL{{\rm GL}}
\def\SL{{\rm SL}}
\def\O{{\rm O}}
\def\Sp{{\rm Sp}}
\def\sp{\frak{sp}}
\def\u{\frak{u}}
\def\so{\frak{so}}
\def\gl{\frak{gl}}
\def\Spin{{\rm Spin}}
\def\Mat{{\rm Mat}}

\def\tr{{\rm tr}}

\begin{document}

\articletype{}

\title{Classification of Lie algebras of specific type\\ in complexified Clifford algebras}

\author{
\name{D.~S. Shirokov\textsuperscript{a}\textsuperscript{b}\thanks{CONTACT D.~S. Shirokov. Email: dm.shirokov@gmail.com, dshirokov@hse.ru, shirokov@iitp.ru}}
\affil{\textsuperscript{a}National Research University Higher School of Economics, 101000 Moscow, Russia; \textsuperscript{b}Kharkevich Institute for Information Transmission Problems of Russian Academy of Sciences, 127051 Moscow, Russia}
}

\maketitle

\begin{abstract}
We give a full classification of Lie algebras of specific type in complexified Clifford algebras. These sixteen Lie algebras are direct sums of subspaces of quaternion types. We obtain isomorphisms between these Lie algebras and classical matrix Lie algebras in the cases of arbitrary dimension and signature. We present sixteen Lie groups: one Lie group for each Lie algebra associated with this Lie group. We study connection between these groups and spin groups.
\end{abstract}

\begin{keywords}
Clifford algebra; Lie algebra; quaternion type; Lie group; spin group
\end{keywords}

\section{Introduction}

In this paper, we give a full classification of Lie algebras of specific type in complexified Clifford algebras. These sixteen Lie algebras are direct sums of subspaces of quaternion types suggested by the author in the previous papers \cite{DAN}, \cite{QuatAaca}, \cite{Quat2Aaca}. We obtain isomorphisms between these Lie algebras and classical matrix Lie algebras in the cases of arbitrary dimension and signature.

We present sixteen Lie groups: one Lie group for each Lie algebra associated with this Lie group. In the papers \cite{Lie1} and \cite{Lie2}, we considered five of these sixteen Lie groups and corresponding Lie algebras and obtained isomorphisms with classical matrix Lie groups and Lie algebras. In the current paper, we obtain results for eleven remaining Lie algebras.

In \cite{Lie2}, we studied connection between these groups and spin groups $\Spin_+(p,q)$. In the current paper, we study relation between some of these groups and complex spin groups $\Spin(n, \C)$.

Note that some groups which are considered in the present paper are related to automorphism groups of the scalar products on the spinor spaces (see \cite{Port}, \cite{Lounesto}, \cite{BT}, \cite{Abl3}), but we do not use this fact in the present paper. In \cite{Lounesto}, one found isomorphisms between groups $\G^{12}_{p,q}$, $\G^{23}_{p,q}$, $\G^{12i12}_{p,q}$, $\G^{23i23}_{p,q}$, $\G^{23i01}_{p,q}$ (also in \cite{Snygg}), $\G^{12i03}_{p,q}$ and classical matrix Lie groups. In the present paper, we obtain these isomorphisms and also isomorphisms for the other groups using different techniques based on relations between operations of conjugations in Clifford algebras and corresponding matrix operations. Our main goal is to obtain isomorphisms for corresponding Lie algebras. We use the notion of additional signature of complexified Clifford algebras suggested by the author in the previous paper \cite{Pauli2}.

Let us consider the real Clifford algebra $\cl_{p,q}$ and the complexified Clifford algebra $\C\otimes\cl_{p,q}$, $p+q=n$, $n\geq1$. The constructions of $\cl_{p,q}$ and $\C\otimes\cl_{p,q}$ are discussed in details in \cite{Port} and \cite{Lounesto}.

Let $e$ be an identity element and let $e_a$, $a=1,\ldots,n$ be the generators of $\cl_{p,q}$,
$e_a e_b+e_b e_a=2\eta_{ab}e$, where $\eta=||\eta_{ab}||$ is the diagonal matrix with $+1$ appearing $p$ times on the diagonal and $-1$ appearing $q$ times on the diagonal. The elements
$e_{a_1\ldots a_k}=e_{a_1}\cdots e_{a_k}$, $a_1<\cdots<a_k$, $k=1,\ldots,n$, together with the identity element $e$ form a basis of the Clifford algebra $\cl_{p,q}$.

Let us denote a vector subspace spanned by the elements $e_{a_1\ldots a_k}$ by $\cl^k_{p,q}$.
%Elements of $\cl^k_{p,q}$ are said to be elements of grade $k$.
We have $\cl_{p,q}=\bigoplus_{k=0}^{n}\cl^k_{p,q}.$
Clifford algebra is a $Z_2$-graded algebra and it is represented as the direct sum of even and odd subspaces:
\begin{eqnarray}
\cl_{p,q}=\cl^{(0)}_{p,q}\oplus\cl^{(1)}_{p,q},\quad \cl^{(i)}_{p,q}\cl^{(j)}_{p,q}\subseteq\cl^{(i+j)\rm{mod} 2}_{p,q},\, \cl^{(i)}_{p,q}=\bigoplus_{k\equiv i\,\rm{mod}2}\cl^k_{p,q},\, i, j=0, 1.\label{Z2}
\end{eqnarray}

\section{Lie algebras of specific type in Clifford algebras and Lie groups}

Let us consider $\cl_{p,q}$ as a vector space and represent it in the form of the direct sum of four subspaces of {\it quaternion types} 0, 1, 2 and 3 (see \cite{DAN}, \cite{QuatAaca}, \cite{Quat2Aaca}):
$$\cl_{p,q}=\overline{\textbf{0}}\oplus\overline{\textbf{1}}\oplus\overline{\textbf{2}}\oplus\overline{\textbf{3}},\qquad
\mbox{where}\quad \overline{\textbf{s}}=\bigoplus_{k\equiv s\,\rm{mod} 4}\cl^k_{p,q},\quad s=0, 1, 2, 3.$$
We represent $\C\otimes\cl_{p,q}$ in the form of the direct sum of eight subspaces:
$\C\otimes\cl_{p,q}=\overline{\textbf{0}}\oplus\overline{\textbf{1}}\oplus\overline{\textbf{2}}\oplus\overline{\textbf{3}} \oplus i\overline{\textbf{0}}\oplus i\overline{\textbf{1}}\oplus i\overline{\textbf{2}}\oplus i\overline{\textbf{3}}.$

\begin{theorem}\label{theoremdim} The subspaces $\overline{\textbf{0}}$, $\overline{\textbf{1}}$, $\overline{\textbf{2}}$, and $\overline{\textbf{3}}$ have the following dimensions:
\begin{eqnarray}
\dim \overline{\textbf{0}}=2^{n-2}+2^{\frac{n-2}{2}}\cos{\frac{\pi n}{4}},\qquad
\dim \overline{\textbf{1}}=2^{n-2}+2^{\frac{n-2}{2}}\sin{\frac{\pi n}{4}},\\
\dim \overline{\textbf{2}}=2^{n-2}-2^{\frac{n-2}{2}}\cos{\frac{\pi n}{4}},\qquad
\dim \overline{\textbf{3}}=2^{n-2}-2^{\frac{n-2}{2}}\sin{\frac{\pi n}{4}}.\nonumber
\end{eqnarray}
\end{theorem}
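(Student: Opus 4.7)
The plan is to reduce the statement to a purely combinatorial identity about sums of binomial coefficients modulo $4$, since the dimension of $\overline{\textbf{s}}$ depends only on the grading and not on the signature $(p,q)$. Since $\{e_{a_1\ldots a_k}:a_1<\cdots<a_k\}$ is a basis of $\cl^k_{p,q}$, we have $\dim\cl^k_{p,q}=\binom{n}{k}$, and hence
\begin{equation*}
\dim\overline{\textbf{s}}=\sum_{k\equiv s\,(\mathrm{mod}\,4)}\binom{n}{k},\qquad s=0,1,2,3.
\end{equation*}

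The natural tool is the roots-of-unity filter with $\omega=i$: for each $s$ one has
\begin{equation*}
\sum_{k\equiv s\,(\mathrm{mod}\,4)}\binom{n}{k}=\frac14\sum_{j=0}^{3} i^{-js}(1+i^{j})^{n}.
\end{equation*}
I would first establish this identity by expanding $(1+i^j)^n$ via the binomial theorem and using $\frac14\sum_{j=0}^{3} i^{j(k-s)}=\mathbf{1}[k\equiv s\,(\mathrm{mod}\,4)]$. Then I would evaluate the four terms: $(1+1)^n=2^n$, $(1-1)^n=0$ (this is the only place $n\geq 1$ is needed), and the conjugate pair $(1\pm i)^n=2^{n/2}(\cos(n\pi/4)\pm i\sin(n\pi/4))$ from the polar form $1\pm i=\sqrt{2}\,e^{\pm i\pi/4}$.

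Substituting these values and separating real and imaginary parts would immediately produce the four formulas: for $s=0$ only the cosine contribution survives and one gets $2^{n-2}+2^{(n-2)/2}\cos(n\pi/4)$; for $s=2$ the cosine term flips sign; for $s=1$ and $s=3$ the factors $i^{-s}$ rotate the contribution into the sine, yielding $2^{n-2}\pm 2^{(n-2)/2}\sin(n\pi/4)$. Rewriting $2^{n/2-1}=2^{(n-2)/2}$ gives the displayed formulas.

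I do not anticipate any real obstacle here: the argument is a direct application of a standard generating-function trick, and the main care is only in bookkeeping the phases $i^{-js}$ correctly for each $s$. As a sanity check, summing the four expressions recovers $\sum_s \dim\overline{\textbf{s}}=4\cdot 2^{n-2}=2^n=\dim\cl_{p,q}$, which confirms that the trigonometric correction terms cancel in pairs as they should.
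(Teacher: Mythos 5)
Your proposal is correct and takes essentially the same approach as the paper: the paper also reduces $\dim\overline{\textbf{s}}$ to the mod-$4$ binomial sums and extracts them by evaluating $(1+i)^n$, $(1+1)^n$, and $(1-1)^n$ via the binomial theorem, which is precisely the content of your roots-of-unity filter with $\omega=i$ (the paper just solves the resulting $2\times2$ linear systems explicitly rather than writing the filter as a single formula). No gaps.
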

\begin{proof} Using Binomial Theorem
$$(1+i)^n=\sum_{k=0}^n C_n^k i^n=(\sum_{k\equiv 0\!\!\!\!\!\mod 4} C_n^{k}-\sum_{k\equiv 2\!\!\!\!\!\mod 4} C_n^{k})+i(\sum_{k\equiv 1\!\!\!\!\!\mod 4} C_n^{k}-\sum_{k\equiv 3\!\!\!\!\!\mod 4} C_n^{k}),$$
where $C_n^k=\frac{n!}{k! (n-k)!}$ are binomial coefficients, and
$$(1+i)^n=(\sqrt{2}(\cos{\frac{\pi}{4}}+i\sin{\frac{\pi}{4}}))^n=(2)^{\frac{n}{2}}(\cos{\frac{\pi n}{4}}+i\sin{\frac{\pi n}{4}}),$$
we get
\begin{eqnarray}
\sum_{k\equiv 0\!\!\!\!\!\mod 4} C_n^{k}-\sum_{k\equiv 2\!\!\!\!\!\mod 4} C_n^{k}=2^{\frac{n}{2}}\cos{\frac{\pi n}{4}},\qquad
\sum_{k\equiv 0\!\!\!\!\!\mod 1} C_n^{k}-\sum_{k\equiv 3\!\!\!\!\!\mod 4} C_n^{k}=2^{\frac{n}{2}}\sin{\frac{\pi n}{4}}.\nonumber
\end{eqnarray}
Taking into account\footnote{One can easily obtain these expressions using twice Binomial Theorem: $0=(1-1)^n=\sum_{k=0}^n(-1)^k C_n^k$ and $2^n=(1+1)^n=\sum_{k=0}^n C_n^k$.}
\begin{eqnarray}
\sum_{k\equiv 0\!\!\!\!\!\mod 4} C_n^{k}+\sum_{k\equiv 2\!\!\!\!\!\mod 4} C_n^{k}=2^{n-1},\qquad
\sum_{k\equiv 1\!\!\!\!\!\mod 4} C_n^{k}+\sum_{k\equiv 3\!\!\!\!\!\mod 4} C_n^{k}=2^{n-1},\nonumber
\end{eqnarray}
we obtain
$$\dim \overline{\textbf{0}}=\sum_{k\equiv 0\!\!\!\!\!\mod 4} \dim \cl^{k}_{p,q}= \sum_{k\equiv 0\!\!\!\!\!\mod 4} C_n^{k}=2^{n-2}+2^{\frac{n-2}{2}}\cos{\frac{\pi n}{4}}$$
and similarly for the other subspaces.
\end{proof}

Using the method of quaternion typification of Clifford algebra elements, we can find the following Lie algebras. We only want to consider Lie subalgebras that are direct sums of subspaces of quaternion types.

\begin{theorem} The complexified Clifford algebra $\C\otimes\cl_{p,q}$ has the following Lie subalgebras\footnote{Here and below we omit the sign of the direct sum to simplify notation: $\overline{\textbf{0}}\oplus\overline{\textbf{2}}=\overline{\textbf{02}}$, $i\overline{\textbf{1}}\oplus i\overline{\textbf{3}}=i\overline{\textbf{13}}$, $\overline{\textbf{0}}\oplus\overline{\textbf{1}}\oplus\overline{\textbf{2}}\oplus\overline{\textbf{3}}= \overline{\textbf{0123}}$, etc.}
\begin{eqnarray}
&&\overline{\textbf{2}}, \quad \overline{\textbf{02}},\quad \overline{\textbf{12}},\quad \overline{\textbf{23}},\quad \overline{\textbf{2}}\oplus i\overline{\textbf{0}},\quad \overline{\textbf{2}}\oplus i\overline{\textbf{1}},\quad \overline{\textbf{2}}\oplus i\overline{\textbf{2}},\quad \overline{\textbf{2}}\oplus i\overline{\textbf{3}},\quad \overline{\textbf{0123}},
\label{Liealgebras}\\
&&\overline{\textbf{02}}\oplus i\overline{\textbf{02}}, \quad \overline{\textbf{12}}\oplus i\overline{\textbf{12}},\quad \overline{\textbf{23}}\oplus i\overline{\textbf{23}},\quad \overline{\textbf{02}}\oplus i\overline{\textbf{13}},\quad \overline{\textbf{12}}\oplus i\overline{\textbf{03}},\quad \overline{\textbf{23}}\oplus i\overline{\textbf{01}}.\nonumber
\end{eqnarray}
\end{theorem}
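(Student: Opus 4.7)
My plan is to reduce the closure claim to a $4\times 4$ commutator table on the real quaternion types, extend it by complex bilinearity to the eight pieces of $\C\otimes\cl_{p,q}$, and then check each of the fifteen candidates directly.

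First, I would establish the table $[\overline{\textbf{s}},\overline{\textbf{t}}]\subseteq\overline{\textbf{u}(s,t)}$. For two basis blades $e_I,e_J$ of grades $j,k$ with $m=|I\cap J|$, one has $e_Ie_J=(-1)^{jk-m}e_Je_I$, and $e_Ie_J$ is, up to sign, a blade of grade $j+k-2m$. Hence $[e_I,e_J]$ vanishes when $jk-m$ is even and equals $2e_Ie_J$ otherwise. Reducing $j,k\bmod 4$ to types $s,t$ and tracking the residue $(s+t-2m)\bmod 4$ under the anticommutation constraint produces the table; the salient entries are $[\overline{\textbf{2}},\overline{\textbf{s}}]\subseteq\overline{\textbf{s}}$ for every $s$, $[\overline{\textbf{0}},\overline{\textbf{0}}]$, $[\overline{\textbf{1}},\overline{\textbf{1}}]$, $[\overline{\textbf{3}},\overline{\textbf{3}}]\subseteq\overline{\textbf{2}}$, $[\overline{\textbf{1}},\overline{\textbf{3}}]\subseteq\overline{\textbf{0}}$, $[\overline{\textbf{0}},\overline{\textbf{1}}]\subseteq\overline{\textbf{3}}$, $[\overline{\textbf{0}},\overline{\textbf{3}}]\subseteq\overline{\textbf{1}}$, and $[\overline{\textbf{0}},\overline{\textbf{2}}]\subseteq\overline{\textbf{0}}$. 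Rules in this spirit already appear in \cite{DAN,QuatAaca,Quat2Aaca}.

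Second, since $i$ is central, $[iA,B]=i[A,B]$ and $[iA,iB]=-[A,B]$, which extends the real table to the eight pieces $\overline{\textbf{s}}, i\overline{\textbf{s}}$ of $\C\otimes\cl_{p,q}$ by complex bilinearity. Every $U$ in~(\ref{Liealgebras}) is a direct sum of at most four of these pieces, so $[U,U]$ expands into finitely many cross-brackets, and verifying each one lies in $U$ is routine. For example, $[\overline{\textbf{23}}\oplus i\overline{\textbf{01}},\,\overline{\textbf{23}}\oplus i\overline{\textbf{01}}]=[\overline{\textbf{23}},\overline{\textbf{23}}]+2i[\overline{\textbf{23}},\overline{\textbf{01}}]-[\overline{\textbf{01}},\overline{\textbf{01}}]$, and the table lands the three summands in $\overline{\textbf{23}}$, $i\overline{\textbf{01}}$, $\overline{\textbf{23}}$ respectively; the remaining checks are identical in shape. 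The structural reason the real subspaces in the first line of~(\ref{Liealgebras}) all contain $\overline{\textbf{2}}$ is that $\overline{\textbf{2}}$ is the unique type preserved by bracketing with every other type, while the ``twisted'' sums in the second line become closed precisely because the minus sign in $[iA,iB]=-[A,B]$ flips a type that would otherwise escape.

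The main obstacle is producing the $4\times 4$ table cleanly in Step~1: one must simultaneously control the parity of $jk-m$ (which decides whether the commutator vanishes) and the residue of $j+k-2m\bmod 4$ (which decides the output type), both depending on $m$, while running without slips through the sixteen pairs $(s,t)$. Once the table is pinned down, Step~2 is pure bookkeeping.
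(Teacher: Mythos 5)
Your proposal is correct and follows essentially the same route as the paper: the paper simply quotes the commutator relations $[\overline{\textbf{k}},\overline{\textbf{k}}]\subseteq\overline{\textbf{2}}$, $[\overline{\textbf{k}},\overline{\textbf{2}}]\subseteq\overline{\textbf{k}}$, $[\overline{\textbf{0}},\overline{\textbf{1}}]\subseteq\overline{\textbf{3}}$, $[\overline{\textbf{0}},\overline{\textbf{3}}]\subseteq\overline{\textbf{1}}$, $[\overline{\textbf{1}},\overline{\textbf{3}}]\subseteq\overline{\textbf{0}}$ from the earlier references and then checks that each of the fifteen sets is closed under the commutator, which is exactly your Step~2. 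The only difference is that you also sketch a (correct) blade-level derivation of that table via the sign $(-1)^{jk-m}$ and the grade $j+k-2m$, which the paper leaves to the cited works.
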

Note that the first four subsets $\overline{\textbf{2}}, \overline{\textbf{02}}, \overline{\textbf{12}}, \overline{\textbf{23}}$ are Lie subalgebras of the real Clifford algebra $\overline{\textbf{0123}}=\cl_{p,q}$.
\begin{proof} Using the properties (see \cite{DAN}, \cite{QuatAaca}, \cite{Quat2Aaca})
\begin{eqnarray}
&&[\overline{\textbf{k}},\overline{\textbf{k}}]\subseteq\overline{\textbf{2}},\qquad k=0, 1, 2, 3 \nonumber;\\
&&[\overline{\textbf{k}},\overline{\textbf{2}}]\subseteq\overline{\textbf{k}}, \qquad k=0, 1, 2, 3 \label{1}; \\
&&[\overline{\textbf{0}},\overline{\textbf{1}}]\subseteq\overline{\textbf{3}}, \quad  [\overline{\textbf{0}},\overline{\textbf{3}}]\subseteq\overline{\textbf{1}}, \quad [\overline{\textbf{1}},\overline{\textbf{3}}]\subseteq\overline{\textbf{0}} \nonumber,
\end{eqnarray}
where $[U,V]=UV-VU$ is the commutator of arbitrary Clifford algebra elements $U$ and $V$,
we obtain each set in (\ref{Liealgebras}) is closed with respect to the commutator.
\end{proof}

We can represent these Lie subalgebras in the way as in Figure \ref{figure1}. Every arrow means that one Lie algebra is a Lie subalgebra of the other.

\begin{figure}
\centering
\resizebox*{12cm}{!}{\includegraphics{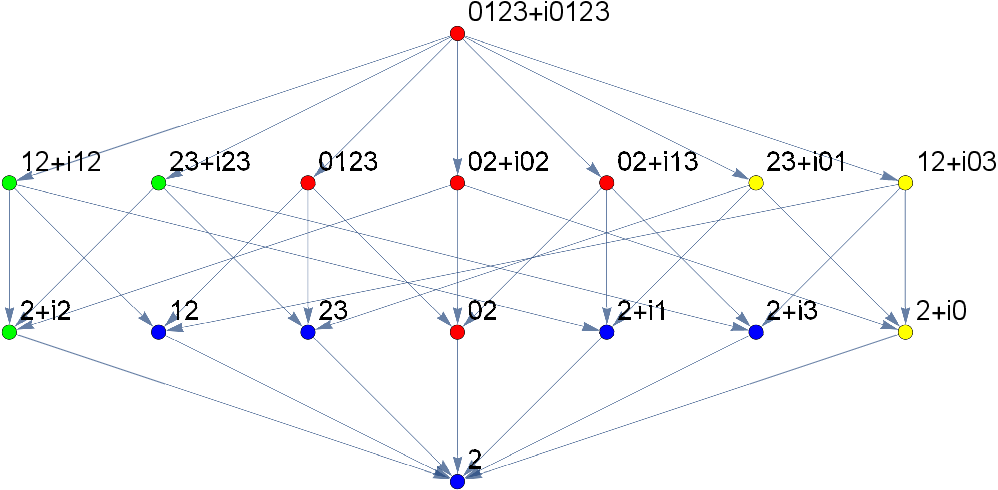}}
\caption{Subspaces of quaternion types as Lie subalgebras of $\C\otimes\cl_{p,q}$.} \label{figure1}
\end{figure}

Any element $U\in\cl_{p,q}$ can be written in the form
\begin{eqnarray}
U=ue+\sum_a u_a e_a+\sum_{a_1<a_2}u_{a_1 a_2}e_{a_1 a_2}+\cdots+u_{1\ldots n}e_{1\ldots n},\label{uform}
\end{eqnarray}
where $u$, $u_a$, $u_{a_1 a_2}$, \ldots, $u_{1\ldots n}$ are real numbers. For arbitrary element $U\in\C\otimes\cl_{p,q}$ we use the same notation (\ref{uform}), where $u, u_a, u_{a_1 a_2}, \ldots, u_{1\ldots n}$ are complex numbers.

Consider the following well-known involutions in $\cl_{p,q}$ and $\C\otimes\cl_{p,q}$:
$$
\hat{U}=U|_{e_a\to-e_a},\quad
\tilde{U}=U|_{e_{a_1\ldots a_r}\to e_{a_r}\ldots e_{a_1}},
$$
where $U$ has the form (\ref{uform}). The operation $U\to \hat{U}$ is called {\it grade involution} and $U\to \tilde{U}$ is called {\it reversion}. Also we have an operation of {\it complex conjugation}
\begin{eqnarray}
\bar U=\bar u e+\sum_a \bar u_a e_a+\sum_{a_1<a_2}\bar u_{a_1 a_2}e_{a_1
a_2}+\sum_{a_1<a_2<a_3}\bar u_{a_1 a_2 a_3}e_{a_1 a_2
a_3}+\cdots +\bar{u}_{1\ldots n}e_{1\ldots n},\nonumber
\end{eqnarray}
where we take the complex conjugation of the complex numbers $u_{a_1 \ldots a_k}$. Superposition of reversion and complex conjugation is {\em pseudo-Hermitian conjugation of Clifford algebra elements}\footnote{The pseudo-Hermitian conjugation of Clifford algebra elements is related to the pseudo-unitary matrix groups as Hermitian conjugation is related to the unitary groups, see \cite{Snygg}, \cite{MarchukShirokov}.}
$$
U^\ddagger=\tilde{\bar{U}}.
$$
In the real Clifford algebra $\cl_{p,q}$, we have $U^\ddagger=\tilde{U}$, because $\bar U=U$.

Note that grade involution and reversion uniquely determine subspaces of quaternion types:
$$\overline{\textbf{s}}=\bigoplus_{k=s {\rm mod}4}\cl^k_{p,q}=\{U\in\cl_{p,q} \,|\, \hat{U}=(-1)^s U, \tilde{U}=(-1)^{\frac{s(s-1)}{2}} U\},\qquad s=0, 1, 2, 3.$$

Now we can consider the following sixteen Lie groups in $\C\otimes\cl_{p,q}$ (see the second column of Table \ref{table1}):
\begin{eqnarray}
&&(\C\otimes\cl_{p,q})^\times,\quad \cl^\times_{p,q},\quad \cl^{(0) \times}_{p,q},\quad (\C\otimes\cl^{(0)}_{p,q})^\times,\quad (\cl^{(0)}_{p,q}\oplus i \cl^{(1)}_{p,q})^\times,\quad \G^{23i01}_{p,q},\nonumber\\
&&\G^{12i03}_{p,q},\quad \G^{2i0}_{p,q},\quad \G^{23i23}_{p,q},\quad \G^{12i12}_{p,q},\quad \G^{2i2}_{p,q},\quad \G^{2i1}_{p,q},\quad \G^{2i3}_{p,q},\quad \G^{12}_{p,q},\quad \G^{23}_{p,q},\quad \G^{2}_{p,q}.\nonumber
\end{eqnarray}

\begin{theorem} The following subsets of $\C\otimes\cl_{p,q}$ in the second column of Table \ref{table1} are Lie groups. The following subsets of $\C\otimes\cl_{p,q}$ in the third column of Table \ref{table1} are Lie algebras of the corresponding Lie groups in the second column of Table \ref{table1}. These Lie groups and Lie algebras have the dimensions given in the forth column of Table \ref{table1}.
\end{theorem}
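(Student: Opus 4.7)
The sixteen entries split naturally into two batches according to how each group is presented. Five of them are groups of units $\mathcal{A}^\times$ of a finite-dimensional associative $\R$-subalgebra $\mathcal{A}\subseteq\C\otimes\cl_{p,q}$, while the remaining eleven $\G^{\alpha}_{p,q}$ are cut out by an equation of the form $U^{\star}U=e$ for a suitable anti-involution $\star$.

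\textbf{First batch.} The entries $(\C\otimes\cl_{p,q})^\times$, $\cl^\times_{p,q}$, $\cl^{(0)\times}_{p,q}$, $(\C\otimes\cl^{(0)}_{p,q})^\times$ and $(\cl^{(0)}_{p,q}\oplus i\cl^{(1)}_{p,q})^\times$ are all of the form $\mathcal{A}^\times$; closure of the last subspace under multiplication follows from the $Z_2$-grading (\ref{Z2}) together with $i^2=-1$. For any finite-dimensional associative $\R$-subalgebra $\mathcal{A}$ of $\C\otimes\cl_{p,q}$, the unit group $\mathcal{A}^\times$ is open in $\mathcal{A}$ and multiplication is polynomial, hence $\mathcal{A}^\times$ is a real Lie group whose Lie algebra coincides with $\mathcal{A}$ equipped with the commutator. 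Dimensions are read off from $\dim_{\R}\cl_{p,q}=2^n$, the $Z_2$-grading, and an extra factor of $2$ for complexification.

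\textbf{Second batch.} For each of the remaining eleven groups I would exhibit an $\R$-linear anti-involution $\star$ obtained as a composition of some of the grade involution, reversion and complex conjugation (for example, $\star$ is reversion for $\G^{2}_{p,q}$ and pseudo-Hermitian conjugation $\ddagger$ for $\G^{2i2}_{p,q}$), and write
\[
\G^{\alpha}_{p,q}=\{U\in(\C\otimes\cl_{p,q})^\times : U^{\star}U=e\}.
\]
The map $\Phi(U)=U^{\star}U$ is smooth with $\Phi(e)=e$ and differential $X\mapsto X^{\star}+X$ at $e$. Since $\Phi^{-1}(e)$ is closed in the ambient Lie group $(\C\otimes\cl_{p,q})^\times$, Cartan's closed-subgroup theorem endows it with a Lie subgroup structure, and differentiating the defining relation, or equivalently using $\exp(tX)^{\star}=\exp(tX^{\star})$, identifies its Lie algebra as $\{X : X^{\star}+X=0\}$. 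To match this against column three I would invoke the characterisation
\[
\overline{\textbf{s}}=\{U : \hat{U}=(-1)^s U,\ \tilde{U}=(-1)^{s(s-1)/2}U\},\qquad s=0,1,2,3,
\]
stated just before the theorem, together with the remark that complex conjugation distinguishes $\overline{\textbf{s}}$ from $i\,\overline{\textbf{s}}$: solving $X^{\star}+X=0$ piece by piece on each of the eight summands $\overline{\textbf{s}},\, i\,\overline{\textbf{s}}$ picks out exactly the direct sum listed in column three. Dimensions then follow by summing the $\dim\overline{\textbf{s}}$ supplied by Theorem \ref{theoremdim}, doubled on complexified summands.

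\textbf{Main obstacle.} There is no conceptual difficulty; the real burden is bookkeeping. For each of the eleven labels $\alpha$ one must (i) exhibit the correct anti-involution $\star$ consistent with the label, and (ii) verify on each of the eight quaternion-type pieces whether the linear equation $X^{\star}+X=0$ holds identically or fails identically. These eleven parallel sign-checks, together with the easier first batch, pin down every row of Table \ref{table1}.
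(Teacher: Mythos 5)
Your proposal is correct and follows essentially the same route as the paper: the paper also treats the sixteen sets as subgroups of $(\C\otimes\cl_{p,q})^\times$ closed under products and inverses, identifies each Lie algebra by linearizing the defining relation at the identity (it does this with a dual-number computation $U=e+\epsilon u$, $\epsilon^2=0$, for one representative case, $\G^{23i23}_{p,q}$, which is the same calculation as your differential $X\mapsto X^\star+X$), and reads the dimensions off Theorem \ref{theoremdim}; your invocation of openness of the unit group and of the closed-subgroup theorem only makes explicit what the paper leaves implicit. Two small cautions. First, your parenthetical example assigns the anti-involution $\ddagger$ to $\G^{2i2}_{p,q}$, but Table \ref{table1} defines that group by $\tilde U U=e$ on $\C\otimes\cl^{(0)}_{p,q}$; the pseudo-Hermitian conjugation $\ddagger$ belongs to $\G^{2i0}_{p,q}$, and running your sign-check with $\ddagger$ on $\C\otimes\cl^{(0)}_{p,q}$ would produce $\overline{\textbf{2}}\oplus i\overline{\textbf{0}}$ rather than $\overline{\textbf{2}}\oplus i\overline{\textbf{2}}$. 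Second, several of the eleven groups are not cut out of all of $(\C\otimes\cl_{p,q})^\times$ by the equation alone but are additionally restricted to a subalgebra ($\cl_{p,q}$, $\cl^{(0)}_{p,q}$, $\C\otimes\cl^{(0)}_{p,q}$, or $\cl^{(0)}_{p,q}\oplus i\cl^{(1)}_{p,q}$), so the piecewise check $X^\star+X=0$ must be performed only on the quaternion-type summands present in that subalgebra. Neither point affects the validity of the overall argument.
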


\begin{table}
\tbl{Lie groups and corresponding Lie algebras of specific type in Clifford algebras.}
{\begin{tabular}{cccc}\toprule
& Lie group & Lie algebra & dimension\\ \midrule
1& $(\C\otimes\cl_{p,q})^\times=\{U\in \C\otimes\cl_{p,q} \, | \, \exists U^{-1}\}$ & $\overline{\textbf{0123}}\oplus i\overline{\textbf{0123}}$ &$2^{n+1}$  \\
2& $\cl^\times_{p,q}=\{U\in\cl_{p,q} \, | \, \exists U^{-1}\}$ & $\overline{\textbf{0123}}$ &$2^n$  \\
3& $\cl^{(0) \times}_{p,q}=\{U\in \cl^{(0)}_{p,q}\, | \, \exists U^{-1}\}$ & $\overline{\textbf{02}}$ &$2^{n-1}$  \\
4& $(\C\otimes\cl^{(0)}_{p,q})^\times=\{ U\in\C\otimes\cl_{p,q}\, | \, \exists U^{-1}\}$ & $\overline{\textbf{02}}\oplus i\overline{\textbf{02}} $ &$2^n$  \\
5& $(\cl^{(0)}_{p,q}\oplus i \cl^{(1)}_{p,q})^\times=\{U\in \cl^{(0)}_{p,q}\oplus i \cl^{(1)}_{p,q} \, | \, \exists U^{-1}\}$ & $\overline{\textbf{02}}\oplus i\overline{\textbf{13}}$ &$2^n$  \\
6& $\G^{23i01}_{p,q}=\{U\in \C\otimes\cl_{p,q}\, | \, U^\ddagger U=e\}$ & $\overline{\textbf{23}}\oplus i\overline{\textbf{01}}$ & $2^n$  \\
7& $\G^{12i03}_{p,q}=\{U\in \C\otimes\cl_{p,q}\, | \, \hat{U}^\ddagger U=e\}$ & $\overline{\textbf{12}}\oplus i\overline{\textbf{03}}$ & $2^n$  \\
8& $\G^{2i0}_{p,q}=\{U\in \cl^{(0)}_{p,q}\, | \, U^\ddagger U=e\}$ & $\overline{\textbf{2}}\oplus i\overline{\textbf{0}}$ & $2^{n-1}$  \\
9& $\G^{23i23}_{p,q}=\{U\in\C\otimes\cl_{p,q}\, | \, \tilde{U}U=e\}$ & $\overline{\textbf{23}}\oplus i\overline{\textbf{23}}$ & $2^{n}-2^{\frac{n+1}{2}}\sin{\frac{\pi (n+1)}{4}}$  \\
10& $\G^{12i12}_{p,q}=\{U\in\C\otimes\cl_{p,q}\, | \, \hat{\tilde{U}}U=e\}$ & $\overline{\textbf{12}}\oplus i\overline{\textbf{12}}$ & $2^{n}-2^{\frac{n+1}{2}}\cos{\frac{\pi (n+1)}{4}}$  \\
11& $\G^{2i2}_{p,q}=\{U\in\C\otimes\cl^{(0)}_{p,q}\, | \, \tilde{U}U=e\}$ & $\overline{\textbf{2}}\oplus i\overline{\textbf{2}}$ & $2^{n-1}-2^{\frac{n}{2}}\cos{\frac{\pi n}{4}}$  \\
12& $\G^{2i1}_{p,q}=\{U\in\cl^{(0)}_{p,q}\oplus i\cl^{(1)}_{p,q} \, | \,  U^\ddagger U=e\}$ & $\overline{\textbf{2}}\oplus i\overline{\textbf{1}}$ & $2^{n-1}-2^{\frac{n-1}{2}}\cos{\frac{\pi (n+1)}{4}}$  \\
13& $\G^{2i3}_{p,q}=\{U\in\cl^{(0)}_{p,q}\oplus i\cl^{(1)}_{p,q} \, | \,  \hat{U}^\ddagger U=e\}$ & $\overline{\textbf{2}}\oplus i\overline{\textbf{3}}$ & $2^{n-1}-2^{\frac{n-1}{2}}\sin{\frac{\pi (n+1)}{4}}$   \\
14& $\G^{23}_{p,q}=\{U\in\cl_{p,q}\, | \, \tilde{U}U=e\}$ & $\overline{\textbf{23}}$ & $2^{n-1}-2^{\frac{n-1}{2}}\sin{\frac{\pi (n+1)}{4}}$  \\
15& $\G^{12}_{p,q}=\{U\in\cl_{p,q}\, | \, \hat{\tilde{U}}U=e\}$ & $\overline{\textbf{12}}$ & $2^{n-1}-2^{\frac{n-1}{2}}\cos{\frac{\pi (n+1)}{4}}$  \\
16& $\G^{2}_{p,q}=\{U\in\cl^{(0)}_{p,q}, | \, \tilde{U}U=e\}$ & $\overline{\textbf{2}}$ & $2^{n-2}-2^{\frac{n-2}{2}}\cos{\frac{\pi n}{4}}$\\  \bottomrule
\end{tabular}}
\label{table1}
\end{table}

\begin{proof} The Lie groups in Table \ref{table1} are subsets of the group $(\C\otimes\cl_{p,q})^\times$ and they are closed under products and inverses. Therefore, they are subgroups of the group $(\C\otimes\cl_{p,q})^\times$.

Let us prove, for example, that $\overline{\textbf{23}}\oplus i\overline{\textbf{23}}$ is a Lie algebra of the corresponding Lie group $\G^{23i23}_{p,q}$. Let $U$ be an arbitrary element of $\G^{23i23}_{p,q}$. Then $U=e+\epsilon u$, where $\epsilon^2=0$ and $u$ is an arbitrary element of the corresponding Lie algebra. Then
$$e=\tilde{U}U=(e-\epsilon \tilde{u})(e+\epsilon u)=e+\epsilon (u-\tilde{u}).$$
Therefore, $u=\tilde{u}$, i.e. $u\in \overline{\textbf{23}}\oplus i\overline{\textbf{23}}$.
We can similarly prove the statement for the other Lie groups and the corresponding Lie algebras.
Using Theorem \ref{theoremdim}, we get the dimension result.
\end{proof}

Note that some Lie groups in the second column of Table \ref{table1} are Lie subgroups of other Lie groups. This property is the same as the corresponding Lie algebras (see Figure \ref{figure1}).

Note that all the Lie groups in Table \ref{table1} contain the spin group $\Spin_+(p,q)$. Similarly, all Lie algebras in Table \ref{table1} contain the Lie algebra $\cl^2_{p,q}$ of the spin group $\Spin_+(p,q)$ because $\cl^2_{p,q}\subset \overline{\textbf{2}}$. We discuss relation between $\Spin_+(p,q)$ and the group $\G^2_{p,q}$ in \cite{Lie2}. In the current paper, we discuss relation between the complex spin group $\Spin(n,\C)$ and the group $\G^{2i2}_{p,q}$ (see below). Note that Salingaros group $\G_{p,q}=\{\pm e, \pm e_{a_1}, \pm e_{a_1 a_2}, \ldots, \pm e_{1\ldots n}\}$ \cite{Sal}, \cite{Abl2}, \cite{Abl3} is a subgroup of $\Spin_+(p,q)$ and all groups in the second column of Table \ref{table1}.

Our goal is to obtain isomorphisms between the sixteen Lie algebras in the third column of Table \ref{table1} and classical matrix Lie algebras. To do this we obtain isomorphisms between Lie groups in the second column of Table \ref{table1} and the corresponding matrix Lie groups.

We have already obtained isomorphisms for the Lie algebras
$\overline{\textbf{2}}\oplus i\overline{\textbf{1}}$, $\overline{\textbf{2}}\oplus i\overline{\textbf{3}}$, $\overline{\textbf{12}}$, $\overline{\textbf{23}}$, $\overline{\textbf{2}}$ with numbers 12-16 in Table \ref{table1} (they are blue in Figure \ref{figure1}). These Lie algebras are isomorphic to the linear, orthogonal, symplectic, and unitary classical Lie algebras in different cases (see papers \cite{Lie1} and \cite{Lie2}). Now we are interested in Lie algebras with numbers 1-11 in Table \ref{table1}.

\section{Lie algebras $\overline{\textbf{0123}}\oplus i\overline{\textbf{0123}}$, $\overline{\textbf{0123}}$, $\overline{\textbf{02}}$, $\overline{\textbf{02}}\oplus i\overline{\textbf{02}}$, $\overline{\textbf{02}}\oplus i\overline{\textbf{13}}$}

Let us consider the Lie algebras $\overline{\textbf{0123}}\oplus i\overline{\textbf{0123}}$, $\overline{\textbf{0123}}$, $\overline{\textbf{02}}$, $\overline{\textbf{02}}\oplus i\overline{\textbf{02}}$, $\overline{\textbf{02}}\oplus i\overline{\textbf{13}}$ with numbers 1-5 in Table \ref{table1} (they are red in Figure \ref{figure1}).

\begin{theorem} We have the following Lie algebra isomorphisms
\begin{eqnarray}
\overline{\textbf{0123}}\oplus i \overline{\textbf{0123}}&\cong&\left\lbrace
\begin{array}{ll}
\gl(2^{\frac{n}{2}},\C), & \mbox{if $n$ is even;}\\
\gl(2^{\frac{n-1}{2}},\C)\oplus \gl(2^{\frac{n-1}{2}},\C), & \mbox{if $n$ is odd,}
\end{array}
\right.\\
\overline{\textbf{0123}}&\cong&\left\lbrace
\begin{array}{ll}
\gl(2^{\frac{n}{2}},\R), & \mbox{if $p-q\equiv0; 2\!\!\mod 8$;}\\
\gl(2^{\frac{n-1}{2}},\R)\oplus \gl(2^{\frac{n-1}{2}},\R), & \mbox{if $p-q\equiv1\!\!\mod 8$;}\\
\gl(2^{\frac{n-1}{2}},\C), & \mbox{if $p-q\equiv3; 7\!\!\mod 8$;}\\
\gl(2^{\frac{n-2}{2}},\mathbb H), & \mbox{if $p-q\equiv4; 6\!\!\mod 8$;}\\
\gl(2^{\frac{n-3}{2}},\mathbb H)\oplus \gl(2^{\frac{n-3}{2}},\mathbb H), & \mbox{if $p-q\equiv5\!\!\mod 8$,}
\end{array}
\right.\\
\overline{\textbf{02}}&\cong&\left\lbrace
\begin{array}{ll}
\gl(2^{\frac{n-1}{2}},\R), & \mbox{if $p-q\equiv1; 7\!\!\mod 8$;}\\
\gl(2^{\frac{n-2}{2}},\R)\oplus \gl(2^{\frac{n-2}{2}},\R), & \mbox{if $p-q\equiv0\!\!\mod 8$;}\\
\gl(2^{\frac{n-2}{2}},\C), & \mbox{if $p-q\equiv2; 6\!\!\mod 8$;}\\
\gl(2^{\frac{n-3}{2}},\mathbb H), & \mbox{if $p-q\equiv3; 5\!\!\mod 8$;}\\
\gl(2^{\frac{n-4}{2}},\mathbb H)\oplus \gl(2^{\frac{n-4}{2}},\mathbb H), & \mbox{if $p-q\equiv4\!\!\mod 8$,}
\end{array}
\right.\\
\overline{\textbf{02}}\oplus i \overline{\textbf{02}}&\cong&\left\lbrace
\begin{array}{ll}
\gl(2^{\frac{n-1}{2}},\C), & \mbox{if $n$ is odd;}\\
\gl(2^{\frac{n-2}{2}},\C)\oplus \gl(2^{\frac{n-2}{2}},\C), & \mbox{if $n$ is even,}
\end{array}
\right.\\
\overline{\textbf{02}}\oplus i \overline{\textbf{13}}&\cong&\left\lbrace
\begin{array}{ll}
\gl(2^{\frac{n}{2}},\R), & \mbox{if $p-q\equiv0; 6\!\!\mod 8$;}\\
\gl(2^{\frac{n-1}{2}},\R)\oplus \gl(2^{\frac{n-1}{2}},\R), & \mbox{if $p-q\equiv7\!\!\mod 8$;}\\
\gl(2^{\frac{n-1}{2}},\C), & \mbox{if $p-q\equiv1; 5\!\!\mod 8$;}\\
\gl(2^{\frac{n-2}{2}},\mathbb H), & \mbox{if $p-q\equiv2; 4\!\!\mod 8$;}\\
\gl(2^{\frac{n-3}{2}},\mathbb H)\oplus \gl(2^{\frac{n-3}{2}},\mathbb H), & \mbox{if $p-q\equiv3\!\!\mod 8$.}
\end{array}
\right.
\end{eqnarray}
\end{theorem}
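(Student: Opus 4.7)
The plan is to reduce all five claims to the classical Cartan--Bott classification of Clifford algebras. The key observation is that every subspace listed is closed not only under the commutator but under the full Clifford product, hence forms an associative subalgebra of $\C\otimes\cl_{p,q}$:
\begin{align*}
\overline{\textbf{0123}}\oplus i\overline{\textbf{0123}} &\;=\; \C\otimes\cl_{p,q}, &
\overline{\textbf{0123}} &\;=\; \cl_{p,q}, &
\overline{\textbf{02}} &\;=\; \cl^{(0)}_{p,q},\\
\overline{\textbf{02}}\oplus i\overline{\textbf{02}} &\;=\; \C\otimes\cl^{(0)}_{p,q}, &
\overline{\textbf{02}}\oplus i\overline{\textbf{13}} &\;=\; \cl^{(0)}_{p,q}\oplus i\,\cl^{(1)}_{p,q}.
\end{align*}
Any associative isomorphism $A\cong\Mat(k,F)$ with $F\in\{\R,\C,\H\}$ upgrades automatically to the Lie isomorphism $A\cong\gl(k,F)$ under the commutator bracket (similarly for block sums), so it suffices to identify each of the five associative algebras as a matrix algebra.

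Cases 1 and 2 are the textbook classifications: $\C\otimes\cl_{p,q}$ is determined by the parity of $n$, and $\cl_{p,q}$ by $p-q\bmod 8$. For case 3 I invoke the classical isomorphism $\cl^{(0)}_{p,q}\cong\cl_{p,q-1}$ (or $\cl_{q,p-1}$ in the boundary case $q=0$), realised by the generators $e_a e_n$ with $a<n$; this lowers $n$ by one and shifts $p-q$ by $+1$, and applying this shift row-by-row to the Cartan--Bott table recovers the listed entries for $\overline{\textbf{02}}$. Case 4 is the complexification of case 3 and therefore depends only on the parity of $n-1$, equivalently of $n$.

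The only non-routine ingredient is case 5, where I set $f_a := i e_a$ and check $f_a^2 = -\eta_{aa}e$ and $f_a f_b + f_b f_a = -2\eta_{ab}e$, so that the $f_a$ obey the defining Clifford relations with negated metric, i.e.\ of $\cl_{q,p}$. A parity count --- even-length products of $f_a$ lie in $\cl^{(0)}_{p,q}$, odd-length ones in $i\,\cl^{(1)}_{p,q}$ --- together with a dimension comparison gives the isomorphism $\cl^{(0)}_{p,q}\oplus i\,\cl^{(1)}_{p,q}\cong\cl_{q,p}$, and classifying $\cl_{q,p}$ via $q-p\equiv -(p-q)\bmod 8$ then produces the row for $\overline{\textbf{02}}\oplus i\overline{\textbf{13}}$ after negating residues. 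The main obstacle is bookkeeping: the four reductions (identity, parity collapse, $+1$ shift, negation) permute the mod-$8$ classes in four different ways, and one must verify case by case that each row aligns with the corresponding Cartan--Bott entry; once the case-$5$ isomorphism is established, the remaining work is mechanical consistency checking plus a little care in the boundary cases $p=0$ or $q=0$ when invoking the even-subalgebra reduction.
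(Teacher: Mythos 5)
Your proposal is correct and follows essentially the same route as the paper: identify the five subspaces with the associative algebras $\C\otimes\cl_{p,q}$, $\cl_{p,q}$, $\cl^{(0)}_{p,q}$, $\C\otimes\cl^{(0)}_{p,q}$, $\cl^{(0)}_{p,q}\oplus i\cl^{(1)}_{p,q}$, then apply the Cartan--Bott classification together with the reductions $\cl^{(0)}_{p,q}\cong\cl_{p,q-1}$ (via $e_a\mapsto e_ae_n$) and $\cl^{(0)}_{p,q}\oplus i\cl^{(1)}_{p,q}\cong\cl_{q,p}$ (via $e_a\mapsto ie_a$), exactly as the paper does. The only cosmetic difference is that you pass from associative to Lie isomorphisms directly via the commutator, whereas the paper first states the corresponding Lie group isomorphisms and then differentiates.
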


\begin{proof} We use the following well-known isomorphisms of algebras (\cite{Lounesto}, p.217):
\begin{eqnarray}
\cl_{p,q}&\cong&\left\lbrace
\begin{array}{ll}
\Mat(2^{\frac{n}{2}},\R), & \mbox{ if $p-q\equiv0; 2\!\!\mod 8$;}\\
\Mat(2^{\frac{n-1}{2}},\R)\oplus \Mat(2^{\frac{n-1}{2}},\R), & \mbox{ if $p-q\equiv1\!\!\mod 8$;}\\
\Mat(2^{\frac{n-1}{2}},\C), & \mbox{ if $p-q\equiv3; 7\!\!\mod 8$;}\\
\Mat(2^{\frac{n-2}{2}},\mathbb H), & \mbox{ if $p-q\equiv4; 6\!\!\mod 8$;}\\
\Mat(2^{\frac{n-3}{2}},\mathbb H)\oplus \Mat(2^{\frac{n-3}{2}},\mathbb H), & \mbox{ if $p-q\equiv5\!\!\mod 8$,}
\end{array}
\right.\nonumber\\
\C\otimes\cl_{p,q}&\cong&\left\lbrace
\begin{array}{ll}
\Mat(2^{\frac{n}{2}},\C), & \mbox{if $n$ is even;}\\
\Mat(2^{\frac{n-1}{2}},\C)\oplus \Mat(2^{\frac{n-1}{2}},\C), & \mbox{if $n$ is odd,}
\end{array}
\right.\label{isomcompl}
\end{eqnarray}
and the following isomorphisms
\begin{eqnarray}
\cl_{p,q-1}\cong\cl^{(0)}_{p,q},\qquad \cl^{(0)}_{p,q}\oplus i\cl^{(1)}_{p,q}\cong \cl_{q,p}.\label{isom12}
\end{eqnarray}
To prove the first isomorphism from (\ref{isom12}) we must change the basis of $\cl_{p, q-1}$:
$$ e_a \to e_a e_{n},\qquad a=1, 2, \ldots, n-1,\qquad (e_{n})^2=-e.$$
The elements $e_a e_{n}$, $a=1, 2, \ldots, n-1$ generate $\cl^{(0)}_{p,q}$.

To prove the second isomorphism from (\ref{isom12}) we must change the basis of $\cl_{q,p}$:
$$e_a \to ie_a,\qquad a=1, 2,\ldots, n.$$
Since $(i e_a)^2=-(e_a)^2$, it follows that the signature $(q,p)$ changes to $(p,q)$. Using (\ref{Z2}), we conclude that $\cl_{q,p}$ changes to $\cl^{(0)}_{p,q}\oplus i\cl^{(1)}_{p,q}$ which is closed under multiplication.

Therefore, we obtain the following Lie group isomorphisms
\begin{eqnarray}
(\C\otimes\cl_{p,q})^\times&\cong&\left\lbrace
\begin{array}{ll}
\GL(2^{\frac{n}{2}},\C), & \mbox{if $n$ is even;}\\
\GL(2^{\frac{n-1}{2}},\C)\times \GL(2^{\frac{n-1}{2}},\C), & \mbox{if $n$ is odd,}
\end{array}
\right.\nonumber\\
\cl^\times_{p,q}&\cong&\left\lbrace
\begin{array}{ll}
\GL(2^{\frac{n}{2}},\R), & \mbox{if $p-q\equiv0; 2\!\!\mod 8$;}\\
\GL(2^{\frac{n-1}{2}},\R)\times \GL(2^{\frac{n-1}{2}},\R), & \mbox{if $p-q\equiv1\!\!\mod 8$;}\\
\GL(2^{\frac{n-1}{2}},\C), & \mbox{if $p-q\equiv3; 7\!\!\mod 8$;}\\
\GL(2^{\frac{n-2}{2}},\mathbb H), & \mbox{if $p-q\equiv4; 6\!\!\mod 8$;}\\
\GL(2^{\frac{n-3}{2}},\mathbb H)\times \GL(2^{\frac{n-3}{2}},\mathbb H), & \mbox{if $p-q\equiv5\!\!\mod 8$,}
\end{array}
\right.\nonumber\\
\cl^{(0)\times}_{p,q} &\cong&\left\lbrace
\begin{array}{ll}
\GL(2^{\frac{n-1}{2}},\R), & \mbox{if $p-q\equiv1; 7\!\!\mod 8$;}\\
\GL(2^{\frac{n-2}{2}},\R)\times \GL(2^{\frac{n-2}{2}},\R), & \mbox{if $p-q\equiv0\!\!\mod 8$;}\\
\GL(2^{\frac{n-2}{2}},\C), & \mbox{if $p-q\equiv2; 6\!\!\mod 8$;}\\
\GL(2^{\frac{n-3}{2}},\mathbb H), & \mbox{if $p-q\equiv3; 5\!\!\mod 8$;}\\
\GL(2^{\frac{n-4}{2}},\mathbb H)\times \GL(2^{\frac{n-4}{2}},\mathbb H), & \mbox{if $p-q\equiv4\!\!\mod 8$,}
\end{array}
\right.\nonumber\\
(\C\otimes\cl^{(0)}_{p,q})^\times&\cong&\left\lbrace
\begin{array}{ll}
\GL(2^{\frac{n-1}{2}},\C), & \mbox{if $n$ is odd;}\\
\GL(2^{\frac{n-2}{2}},\C)\times \GL(2^{\frac{n-2}{2}},\C), & \mbox{if $n$ is even,}
\end{array}
\right.\nonumber\\
(\cl^{(0)}_{p,q}\oplus i\cl^{(1)}_{p,q})^\times&\cong&\left\lbrace\begin{array}{ll}
\GL(2^{\frac{n}{2}},\R), & \mbox{if $p-q\equiv0; 6\!\!\mod 8$;}\\
\GL(2^{\frac{n-1}{2}},\R)\times \GL(2^{\frac{n-1}{2}},\R), & \mbox{if $p-q\equiv7\!\!\mod 8$;}\\
\GL(2^{\frac{n-1}{2}},\C), & \mbox{if $p-q\equiv1; 5\!\!\mod 8$;}\\
\GL(2^{\frac{n-2}{2}},\mathbb H), & \mbox{if $p-q\equiv2; 4\!\!\mod 8$;}\\
\GL(2^{\frac{n-3}{2}},\mathbb H)\times \GL(2^{\frac{n-3}{2}},\mathbb H), & \mbox{if $p-q\equiv3\!\!\mod 8$.}
\end{array}
\right.\nonumber
\end{eqnarray}
Using these Lie group isomorphisms, we obtain the Lie algebra isomorphisms of the theorem.
\end{proof}

\section{Theorem on faithful and irreducible representations of complexified Clifford algebras with additional properties}\label{section4}

We need the following theorem to obtain isomorphisms for the groups with numbers 6-11 in Table \ref{table1}. Note that we use a similar method in \cite{Lie1} and \cite{Lie2}. In these papers, we use faithful and irreducible matrix representations of the real Clifford algebras $\cl_{p,q}$ to obtain theorems for the groups with numbers 12-16 in Table \ref{table1}. We use faithful and irreducible matrix representations of the complexified Clifford algebras $\C\otimes\cl_{p,q}$ with some additional properties.

Let us consider a diagonal matrix $J=\diag(1, \ldots, 1, -1, \ldots, -1)$ of arbitrary even size with the same number of $1$'s and $-1$'s on the diagonal. We denote the block-diagonal matrix with two identical blocks $J$ by $\diag(J, J)$.

\begin{theorem} \label{theoremMatrPr} There exists a faithful and irreducible representation of $\C\otimes\cl_{p,q}$ over $\C$ or $\C\oplus\C$
$$\gamma:\C\otimes\cl_{p,q}\to
\left\lbrace
\begin{array}{ll}
\Mat(2^{\frac{n}{2}},\C), & \mbox{if $n$ is even;}\\
\Mat(2^{\frac{n-1}{2}},\C)\oplus \Mat(2^{\frac{n-1}{2}},\C), & \mbox{if $n$ is odd}
\end{array}
\right.
$$
such that
\begin{eqnarray}
(\gamma_a)^\dagger=\eta_{aa}\gamma_a,\qquad a=1, \ldots, n,\label{uslov}
\end{eqnarray}
where $\gamma_a:=\gamma(e_a)$ and ${}^\dagger$ is the Hermitian transpose of a matrix, and
\begin{itemize}
  \item in the case of even $n$, $p\neq0$
  \begin{eqnarray}
  \gamma_{1\ldots p}&=&\alpha_p J,\qquad \alpha_p=\left\lbrace
\begin{array}{ll}
1, & \mbox{if $p\equiv0, 1\!\!\mod 4$};\\
i, & \mbox{if $p\equiv2, 3\!\!\mod 4$},
\end{array}\label{alpha}
\right.
  \end{eqnarray}
  \item in the case of even $n$, $q\neq0$
  \begin{eqnarray}
  \gamma_{p+1\ldots n}&=&\sigma_q J,\qquad \sigma_q=\left\lbrace
\begin{array}{ll}
1, & \mbox{if $q\equiv0, 3\!\!\mod 4$};\\
i, & \mbox{if $q\equiv1, 2\!\!\mod 4$},
\end{array}
\right.\label{sigma}
  \end{eqnarray}
  \item in the case of odd $n\geq 3$, $p\neq0$ is even, and $q$ is odd
  \begin{eqnarray}
  \gamma_{1\ldots p}&=&\alpha_p\diag(J,J),\nonumber
  \end{eqnarray}
  \item in the case of odd $n\geq 3$, $q\neq0$ is even, and $p$ is odd
  \begin{eqnarray}
  \gamma_{p+1\ldots n}&=&\sigma_q\diag(J,J).\nonumber
  \end{eqnarray}
  \end{itemize}
  Moreover, in the last two cases all block-diagonal matrices $\gamma_a$, $a=1, \ldots, n$ consist of two blocks of the same size that differ only in sign.
\end{theorem}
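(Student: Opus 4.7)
The plan is to exhibit the required representation explicitly via the Brauer--Weyl construction and verify each listed property in turn. One starts from the classical realization of $\C\otimes\cl_{p,q}$ by tensor products of the Pauli matrices $\sigma_1,\sigma_2,\sigma_3$ and the identity: for $n$ even this produces a faithful irreducible representation $\gamma:\C\otimes\cl_{p,q}\to\Mat(2^{n/2},\C)$; for $n$ odd the pseudoscalar $e_{1\ldots n}$ lies in the center of $\C\otimes\cl_{p,q}$ and squares to a nonzero scalar, so decomposing by its two scalar eigenvalues splits the algebra into two copies of $\Mat(2^{(n-1)/2},\C)$, giving the $\C\oplus\C$-valued representation. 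The Hermiticity requirement~(\ref{uslov}) is then arranged by taking each $\gamma_a$ with $a\le p$ to be a Hermitian Pauli-tensor and each $\gamma_a$ with $a>p$ to be $i$ times such a tensor, so that $\gamma_a^\dagger=\eta_{aa}\gamma_a$ holds automatically.

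The heart of the proof is the explicit $J$-form for the products. Direct calculation gives
\[
(e_{1\ldots p})^2=(-1)^{p(p-1)/2},\qquad (e_{p+1\ldots n})^2=(-1)^{q(q+1)/2},
\]
so once $\gamma_{1\ldots p}$ is known to have the shape $cJ$ with $J^2=I$, squaring forces $c\in\{\pm1,\pm i\}$, and the tabulated values $\alpha_p,\sigma_q\in\{1,i\}$ are the ones picked out by the construction below. To arrange that $\gamma_{1\ldots p}=\alpha_p J$ with $J=\diag(1,\ldots,1,-1,\ldots,-1)$ in canonical form, I induct on $n$ using the doubling isomorphism $\cl_{p+1,q+1}\cong\cl_{p,q}\otimes\Mat(2,\R)$, with base cases such as $\cl_{2,0}$ (where $\gamma_1\gamma_2=\sigma_1\sigma_2=i\sigma_3=iJ$) and $\cl_{0,2}$; the inductive step is designed so that the off-diagonal Pauli factors in $\gamma_1\cdots\gamma_p$ collapse pairwise, leaving a diagonal matrix equal to $\alpha_p J$. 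I carry the induction out separately for the $p$-bullet and the $q$-bullet, since in a single representation one can generally realize only one of the two products as the canonically ordered $\alpha_p J$ or $\sigma_q J$.

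For odd $n\geq3$ under the stated parity hypotheses, the relevant product $\gamma_{1\ldots p}$ (respectively $\gamma_{p+1\ldots n}$) lies in $\C\otimes\cl^{(0)}_{p,q}$, on which the central pseudoscalar takes the same scalar value on both summands of the $\C\oplus\C$ decomposition; hence its two matrix blocks coincide, which promotes the even-$n$ conclusion $\alpha_p J$ to the block form $\alpha_p\diag(J,J)$, and analogously to $\sigma_q\diag(J,J)$. The concluding remark that each $\gamma_a$ is block-diagonal with two blocks differing only in sign is immediate from the fact that $e_a$ is odd and so anticommutes with the pseudoscalar, whose eigenvalues on the two summands are negatives of one another. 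The principal obstacle throughout is the bookkeeping in the inductive step: ensuring that the canonical sign pattern of $J$ (all $+1$'s first, then all $-1$'s) is preserved when doubling the representation, which is what pins down the choice of Pauli-tensor slot ordering in the Brauer--Weyl construction.
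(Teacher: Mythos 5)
Your construction of the underlying representation $\beta$ (Pauli tensor products, multiplying the last $q$ generators by $i$ to get $(\gamma_a)^\dagger=\eta_{aa}\gamma_a$, and splitting by the central pseudoscalar for odd $n$) matches the paper's recursive construction and is fine. The gap is in the central claim, the canonical form $\gamma_{1\ldots p}=\alpha_p J$. You propose to obtain it \emph{directly} from the inductive Brauer--Weyl construction by choosing the slot ordering so that "the off-diagonal Pauli factors collapse pairwise, leaving a diagonal matrix equal to $\alpha_p J$", and you yourself flag the bookkeeping as the principal obstacle --- but you never resolve it. Two things are missing. First, a product of Pauli tensors that happens to be diagonal has its $+1$'s and $-1$'s interleaved according to a bit-string parity pattern, not consecutively, so even in the best case you would still need a final permutation similarity to reach $J=\diag(1,\ldots,1,-1,\ldots,-1)$; this step is absent. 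Second, and more seriously, nothing in your argument shows that the two eigenvalues occur with \emph{equal multiplicity}, which is exactly what the statement of the theorem requires of $J$. Your observation that squaring forces the scalar $c$ into $\{\pm1,\pm i\}$ says nothing about multiplicities.

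The paper closes both holes at once with a short a posteriori argument that you could adopt: set $M=\alpha_p^{-1}\beta_{1\ldots p}$ and check that $M^2={\bf 1}$, $M^\dagger=M^{-1}$ (hence $M$ is a Hermitian involution), and $\tr M=0$ because the trace of $\beta(U)$ is proportional to the grade-$0$ projection of $U$, which vanishes for $e_{1\ldots p}$. A traceless Hermitian involution is unitarily conjugate to $J$ with balanced signature, and conjugating the entire representation by that unitary $T$ preserves the property $(\gamma_a)^\dagger=\eta_{aa}\gamma_a$, so no inductive bookkeeping is needed. (For odd $n$ one runs the same argument blockwise with $T=\diag(T_1,T_1)$, which also preserves the block structure.) One further small point: in your odd-$n$ paragraph you justify the equality of the two blocks of $\gamma_{1\ldots p}$ by saying the pseudoscalar "takes the same scalar value on both summands" --- it takes \emph{opposite} values there; the correct reason is that each generator maps to $\diag(\beta_a,-\beta_a)$, so any product of an \emph{even} number of generators has two identical blocks.
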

\begin{proof} Let us construct the following representation $\beta$ of $\C\otimes\cl_{p,q}$ over $\C$ or $\C\oplus\C$
$$\beta:\C\otimes\cl_{p,q}\to
\left\lbrace
\begin{array}{ll}
\Mat(2^{\frac{n}{2}},\C), & \mbox{if $n$ is even;}\\
\Mat(2^{\frac{n-1}{2}},\C)\oplus \Mat(2^{\frac{n-1}{2}},\C), & \mbox{if $n$ is odd,}
\end{array}
\right.
$$
using the following algorithm.

For the identity element of $\C\otimes\cl_{p,q}$ we always use the identity matrix $\beta(e)={\bf 1}$ of the corresponding size. For basis element $e_{a_1 \ldots a_k}$ we use the matrix that equals the product of matrices corresponding to $e_{a_1}$, \ldots, $_{a_k}$.

We present the matrix representation $\beta: e_a \to \beta_a$ of $\C\otimes\cl_{n,0}$ below. To obtain the matrix representation of $\C\otimes\cl_{p,q}$, $q\neq 0$, we should multiply matrices $\beta_a$, $a=p+1, \ldots, n$ by $i$.

In the cases of small dimensions, we construct $\beta$ in the following way:
\begin{itemize}
  \item In the case $\C\otimes\cl_{1,0}$: $\beta(e_1)=\diag(1, -1)$.
  \item In the case $\C\otimes\cl_{2,0}$: $\beta(e_1)=\diag(1, -1)$, $\beta(e_2)=\left( \begin{array}{ll}
 0 & 1 \\
 1 & 0 \end{array}\right).$
\end{itemize}
The representation $\beta$ (over $\C$ or $\C\otimes\C$) is faithful and irreducible in these particular cases. Suppose that we have the faithful and irreducible matrix representation $\beta$ of $\C\otimes\cl_{p,q}$ for even $n=p+q=2k$: $\beta(e_a)=\beta_a$, $a=1, \ldots, n$. Then, for the complexified Clifford algebra with $p+q=n+1=2k+1$ we use the following representation:
$e_a\to \diag(\beta_a, -\beta_a)$, $a=1, \ldots, n$, $e_{n+1} \to \diag(i^k \beta_{1} \cdots \beta_n, -i^k \beta_1 \cdots \beta_n)$. For the complexified Clifford algebra with $p+q=n+2=2k+2$ we use the following representation: the same for $e_a$, $a=1, \ldots, n+1$ as in the previous case, and
$e_{n+2}\to\left( \begin{array}{ll}
 0 & {\bf1} \\
 {\bf1} & 0 \end{array}\right).
$

Using this recursive method we obtain the faithful and irreducible representation $\beta$ of all $\C\otimes\cl_{p,q}$ (see isomorphisms (\ref{isomcompl})).

Let us give some examples.

\begin{description}
  \item[$\C\otimes\cl_{3,0}$:]
$e_1 \to \beta_1=\diag(1, -1, -1, 1),\\
  e_2 \to \beta_2=\left( \begin{array}{llll}
 0 & 1 & 0 & 0\\
 1 & 0 & 0 & 0\\
 0 & 0 & 0 & -1\\
 0 & 0 & -1 & 0\end{array}\right),\quad e_3 \to \beta_3=\left( \begin{array}{llll}
 0 & i & 0 & 0\\
 -i & 0 & 0 & 0\\
 0 & 0 & 0 & -i\\
 0 & 0 & i & 0\end{array}\right).$
 \item[$\C\otimes\cl_{4,0}$:]
  $e_1 \to \beta_1,\quad e_2 \to \beta_2,\quad e_3 \to \beta_3,\quad e_4 \to \beta_4=\left( \begin{array}{llll}
 0 & 0 & 1 & 0\\
 0 & 0 & 0 & 1\\
 1 & 0 & 0 & 0\\
 0 & 1 & 0 & 0\end{array}\right).$
  \item[$\C\otimes\cl_{1,3}$:]
  $e_1 \to \beta_1,\quad e_2 \to i\beta_2,\quad e_3 \to i\beta_3,\quad e_4 \to i\beta_4.$
\end{description}

Note that
\begin{eqnarray}
(\beta_a)^\dagger=\eta_{aa}\beta_a,\quad a=1, \ldots, n.\label{uslov2}
\end{eqnarray}

Let us consider the case of even $n$ and the matrix $M=\frac{1}{\alpha_p}\beta_{1\ldots p}$, where $\alpha_p$ is defined in (\ref{alpha}). We have
$$M^2=\frac{1}{\alpha_p^2}(-1)^{\frac{p(p-1)}{2}}\beta_1\ldots\beta_p\beta_p\ldots\beta_1={\bf 1}.$$
Using (\ref{uslov2}), we get $M^{\dagger}=M^{-1}$. Using $M^2={\bf 1}$ and $\tr M=0$\footnote{Because, trace of this matrix equals (up to multiplication by a constant) the projection of element $e_{1\ldots p}$ onto the subspace $\cl^0_{p,q}$  (see \cite{trace}) that is zero.}, we conclude that the spectrum of $M$ consists of the same numbers of $1$'s and $-1$'s.  Therefore there exists unitary matrix $T^\dagger=T^{-1}$ such that $T^{-1}MT=J.$ Now we consider transformation $T^{-1}\beta_a T=\gamma_a$ and obtain another matrix representation $\gamma$ of $\C\otimes\cl_{p,q}$ with $\gamma_{1\ldots p}=T^{-1}\beta_{1\ldots p}T =\alpha_p J$ and (\ref{uslov}) because of (\ref{uslov2}) and $T^{\dagger}=T^{-1}$:
$$(\gamma_a)^\dagger=(T^{-1}\beta_a T)^\dagger=T^\dagger (\beta_a)^\dagger (T^{-1})^\dagger=T^{-1}\eta_{aa}\beta_a T=\eta_{aa}\gamma_a.$$

We can prove the second statement of the theorem similarly. We take $M=\frac{1}{\sigma_q}\beta_{p+1\ldots n}$ and obtain
$$M^2=\frac{1}{\sigma_q^2}(-1)^{\frac{q(q-1)}{2}}(-1)^{q}{\bf 1}={\bf 1}.$$

Let us consider the case of $\C\otimes\cl_{p,q}$ with odd $n=p+q$. Let $p$ be even. We use the faithful and irreducible representation $\beta$ of $\C\otimes\cl_{p,q}$. We have (\ref{uslov2}) and the matrices $\beta_a$ consist of two blocks that differ only in sign. Since $p$ is even, it follows that the matrix $\beta_{1\ldots p}=\diag(D,D)$ consists of two identical blocks which we denote by $D$. Let us consider the matrix $M=\frac{1}{\alpha_p}\beta_{1\ldots p}$. We have $M^2={\bf 1}$, $M^\dagger=M$, and $\tr M=0$. Therefore, $D^2={\bf 1}$, $D^\dagger=D$ and $\tr D=0$. There exists unitary matrix $T_1^\dagger=T_1^{-1}$ such that
$$T_1^{-1}DT_1=J\quad \Rightarrow \quad T^{-1}MT=\diag(J, J),\qquad T=\diag(T_1, T_1).$$

We consider the transformation $T^{-1}\beta_a T=\gamma_a$ and obtain another matrix representation $\gamma$. Since $T^\dagger=T^{-1}$, it follows that $\gamma_a^\dagger=\eta_{aa}\gamma_a$ and the matrices $\gamma_{a}$ consist of two blocks that differ only in sign.

We can prove the last statement of the theorem similarly.
\end{proof}

Note that we can consider in $\C\otimes\cl_{p,q}$ (and $\cl_{p,q}$) a linear operation (involution) $\dagger: \C\otimes\cl_{p,q}\to\C\otimes\cl_{p,q}$ such that $(\lambda e_{a_1 \ldots a_k})^\dagger=\bar{\lambda} (e_{a_1 \ldots a_k})^{-1}$, $\lambda\in\C$. We call this operation {\it Hermitian conjugation of Clifford algebra elements}. This operation is well-known and many authors use it, for example, in different questions of field theory in the case of signature $p=1$, $q=3$. For more details, see \cite{MarchukShirokov}. This operation is called the transposition anti-involution in the case of real Clifford algebras in \cite{Abl1}, \cite{Abl2}, \cite{Abl3}.

Note that we have the following relation between operation of Hermitian conjugation of Clifford algebra elements $\dagger$ and other operations in complexified Clifford algebra $\C\otimes\cl_{p,q}$ (see \cite{MarchukShirokov})
\begin{eqnarray}
U^\dagger&=&(e_{1\ldots p})^{-1}U^\ddagger e_{1 \ldots p},\qquad \mbox{if $p$ is odd;}\nonumber\\
U^\dagger&=&(e_{1\ldots p})^{-1}\hat{U}^{\ddagger} e_{1 \ldots p},\qquad \mbox{if $p$ is even;}\label{sogldager2}\\
U^\dagger&=&(e_{p+1\ldots n})^{-1}U^\ddagger e_{p+1 \ldots n},\qquad \mbox{if $q$ is even;}\nonumber\\
U^\dagger&=&(e_{p+1\ldots n})^{-1}\hat{U}^{\ddagger} e_{p+1 \ldots n},\qquad \mbox{if $q$ is odd.}\nonumber
\end{eqnarray}

The Hermitian conjugation of Clifford algebra elements corresponds to the Hermitian conjugation of matrix $\beta(U^\dagger)=(\beta(U))^\dagger$ for the faithful and irreducible matrix representations over $\C$ and $\C\oplus\C$ of complexified Clifford algebra, based on the fixed idempotent and the basis of the corresponding left ideal, see \cite{MarchukShirokov}. Similarly we have for the matrix representation $\beta$ from Theorem \ref{theoremMatrPr} because of properties (\ref{uslov}).

\section{Lie algebras $\overline{\textbf{23}}\oplus i\overline{\textbf{01}}$, $\overline{\textbf{12}}\oplus i\overline{\textbf{03}}$, $\overline{\textbf{2}}\oplus i\overline{\textbf{0}}$}

Let us consider the Lie algebras $\overline{\textbf{23}}\oplus i\overline{\textbf{01}}$, $\overline{\textbf{12}}\oplus i\overline{\textbf{03}}$, $\overline{\textbf{2}}\oplus i\overline{\textbf{0}}$ with numbers 6-8 in Table~\ref{table1} (they are yellow in Figure \ref{figure1}). One of them, $\G^{23i01}_{p,q}$, has been considered in \cite{Snygg} by Professor J.~Snygg. He calls it c-unitary group. We consider this group in different questions of field theory \cite{pseud} and call it pseudo-unitary group. In \cite{Snygg}, you can find isomorphisms for the group $\G^{23i01}_{p,q}$. In the current paper, we present another proof using relations between matrix operations and operations of conjugation in $\C\otimes\cl_{p,q}$. Also we obtain isomorphisms for the groups $\G^{21i03}_{p,q}$ and $\G^{2i0}_{p,q}$. Finally, we present isomorphisms for the corresponding Lie algebras.

\begin{theorem} We have the following Lie algebra isomorphisms
\begin{eqnarray}
\overline{\textbf{23}}\oplus i\overline{\textbf{01}}&\cong&\left\lbrace
\begin{array}{ll}
\u(2^{\frac{n}{2}}), & \mbox{\rm if $p$ is even and $q=0$;}\\
\u(2^{\frac{n-1}{2}})\oplus \u(2^{\frac{n-1}{2}}), & \mbox{\rm if $p$ is odd and $q=0$;}\\
\u(2^{\frac{n-2}{2}},2^{\frac{n-2}{2}}), & \mbox{\rm if $n$ is even and $q\neq 0$;}\\
\u(2^{\frac{n-3}{2}},2^{\frac{n-3}{2}})\oplus \u(2^{\frac{n-3}{2}},2^{\frac{n-3}{2}}), & \mbox{\rm if $p$ is odd and $q\neq 0$ is even;}\\
\gl(2^{\frac{n-1}{2}}, \C), & \mbox{\rm if $p$ is even and $q$ is odd,}
\end{array}
\right.\\
\overline{\textbf{12}}\oplus i\overline{\textbf{03}}&\cong&\left\lbrace
\begin{array}{ll}
\u(2^{\frac{n}{2}}), & \mbox{\rm if $p=0$ and $q$ is even;}\\
\u(2^{\frac{n-1}{2}})\oplus \u(2^{\frac{n-1}{2}}), & \mbox{\rm if $p=0$ and $q$ is odd;}\\
\u(2^{\frac{n-2}{2}},2^{\frac{n-2}{2}}), & \mbox{\rm if $n$ is even and $p\neq 0$;}\\
\u(2^{\frac{n-3}{2}},2^{\frac{n-3}{2}})\oplus \u(2^{\frac{n-3}{2}},2^{\frac{n-3}{2}}), & \mbox{\rm if $p\neq 0$ is even and $q$ is odd;}\\
\gl(2^{\frac{n-1}{2}}, \C), & \mbox{\rm if $p$ is odd and $q$ is even,}
\end{array}
\right.\\
\overline{\textbf{2}}\oplus i\overline{\textbf{0}}&\cong&\left\lbrace
\begin{array}{ll}
\u(2^{\frac{n-1}{2}}), & \mbox{\rm if $(n,0)$ or $(0,n)$, where $n$ is odd;}\\
\u(2^{\frac{n-2}{2}})\oplus \u(2^{\frac{n-2}{2}}), & \mbox{\rm if $(n,0)$ or $(0,n)$, when $n$ is even;}\\
\u(2^{\frac{n-3}{2}},2^{\frac{n-3}{2}}), & \mbox{\rm if $n$ is odd, $p\neq 0$, and $q\neq 0$;}\\
\u(2^{\frac{n-4}{2}},2^{\frac{n-4}{2}})\oplus \u(2^{\frac{n-4}{2}},2^{\frac{n-4}{2}}), & \mbox{\rm if $p\neq 0$ and $q\neq 0$ are even;}\\
\gl(2^{\frac{n-2}{2}},\C), & \mbox{\rm if $p$ and $q$ are odd.}
\end{array}
\right.
\end{eqnarray}
\end{theorem}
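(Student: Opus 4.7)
My plan is to reduce each of the three Lie groups to a (pseudo-)unitary or general linear matrix group via the faithful and irreducible representation $\gamma$ of Theorem \ref{theoremMatrPr}, and then read off the Lie algebra isomorphisms by passing to the tangent space at the identity, exactly as in the Section 2 proof that $\overline{\textbf{23}}\oplus i\overline{\textbf{23}}$ is the Lie algebra of $\G^{23i23}_{p,q}$.

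The key ingredient is the intertwining property $\gamma(U^\dagger)=(\gamma(U))^\dagger$, which follows by multiplicativity from $(\gamma_a)^\dagger=\eta_{aa}\gamma_a$ (equation (\ref{uslov})) applied to basis elements. The second ingredient is the collection of identities (\ref{sogldager2}), which express $\ddagger$ (or $\widehat{(\cdot)}^{\ddagger}$) in terms of $\dagger$ up to conjugation by $e_{1\ldots p}$ or $e_{p+1\ldots n}$.

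For $\G^{23i01}_{p,q}$: whenever $p$ is odd or $q$ is even, (\ref{sogldager2}) yields $U^\ddagger=e_L U^\dagger e_L^{-1}$ with $e_L\in\{e_{1\ldots p},\,e_{p+1\ldots n}\}$, and $U^\ddagger U=e$ becomes $U^\dagger K U=K$ with $K=e_L^{-1}$. Applying $\gamma$ and using Theorem \ref{theoremMatrPr}, $\gamma(K)$ is a nonzero scalar multiple of $J$ (when $n$ is even) or of $\diag(J,J)$ (when $n$ is odd); the scalar absorbs into $J$ and the matrix equation is $M^\dagger J M=J$, identifying the group as $\U(2^{(n-2)/2},2^{(n-2)/2})$ (even $n$) or two copies $\U(2^{(n-3)/2},2^{(n-3)/2})\oplus \U(2^{(n-3)/2},2^{(n-3)/2})$ (odd $n$). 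In the definite-signature cases $q=0$ (resp.\ $p=0$ for $\G^{12i03}$) the representing matrices are already Hermitian, so the same calculation yields the compact $\U$ in place of $\U(s,s)$. The argument for $\G^{12i03}_{p,q}$ is dual: the hat in $\hat{U}^\ddagger U=e$ cancels against the hat produced by the other lines of (\ref{sogldager2}), swapping the roles of $e_{1\ldots p}$ and $e_{p+1\ldots n}$. The argument for $\G^{2i0}_{p,q}$ is the same one applied to the restriction of $\gamma$ to the even subalgebra $\cl^{(0)}_{p,q}\cong\cl_{p,q-1}$, whose faithful irreducible complex representation has dimensions shifted by $n\mapsto n-1$.

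The main obstacle is the remaining parity case producing the $\gl$ entry: $p$ even and $q$ odd for $\G^{23i01}_{p,q}$ (respectively $p$ odd, $q$ even for $\G^{12i03}_{p,q}$), where only the hat-version of (\ref{sogldager2}) applies and the defining equation becomes $(\hat U)^\dagger K U=K$. Here $n$ is odd, so $\gamma(U)=\diag(M_1,M_2)$; a short computation using $\gamma_a=\diag(D_a,-D_a)$ shows that grade involution swaps the two blocks, $\gamma(\hat U)=\diag(M_2,M_1)$, while $\gamma(K)$ is a scalar multiple of $\diag(J,J)$. The equation then reads $M_2^\dagger J M_1=J$, which determines $M_2$ freely from an invertible $M_1$; hence the group is isomorphic to $\GL(2^{(n-1)/2},\C)$, producing the unique $\gl$ entry in the table. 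Once this case analysis is complete, passing to Lie algebras via $U=e+\epsilon u$ (as in the Section 2 proof) gives the stated isomorphisms.
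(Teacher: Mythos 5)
Your strategy for $\G^{23i01}_{p,q}$ and $\G^{12i03}_{p,q}$ is essentially the paper's own: it likewise uses (\ref{sogldager2}) to rewrite $U^\ddagger U=e$ as $U^\dagger e_{1\ldots p}U=e_{1\ldots p}$ (or with $e_{p+1\ldots n}$), applies Theorem \ref{theoremMatrPr} so that the middle element is represented by $\alpha_p J$, $\sigma_q J$ or a multiple of $\diag(J,J)$, and treats the exceptional parity case ($p$ even, $q$ odd for $\G^{23i01}_{p,q}$) by exactly your block computation: even part $\diag(A,A)$, odd part $\diag(B,-B)$, grade involution swapping the blocks, and $(A-B)^\dagger J(A+B)=J$ forcing one block to be determined by the other, whence $\GL(2^{\frac{n-1}{2}},\C)$. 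The only cosmetic difference is that the paper derives $\G^{12i03}_{p,q}\cong\G^{23i01}_{q,p}$ from the substitution $e_a\to ie_a$ rather than rerunning the dual lines of (\ref{sogldager2}); both routes are fine.

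The one step that is under-specified in a way that could produce wrong signatures is $\G^{2i0}_{p,q}$. You say the argument is ``the same one applied to the restriction of $\gamma$ to the even subalgebra $\cl^{(0)}_{p,q}\cong\cl_{p,q-1}$,'' but the defining condition of $\G^{2i0}_{p,q}$ involves the reversion of $\cl_{p,q}$ restricted to $\cl^{(0)}_{p,q}$, and reversion does not commute with the identification $e_a\mapsto e_ae_n$: since $\widetilde{e_ae_n}=-e_ae_n$, the operation $\ddagger$ of $\cl^{(0)}_{p,q}$ transports to $\hat{U}^\ddagger$ on $\cl_{p,q-1}$, not to $U^\ddagger$. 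So the correct identification is $\G^{2i0}_{p,q}\cong\G^{12i03}_{p,q-1}$ (as in the paper), not $\G^{23i01}_{p,q-1}$. If one applies ``the same'' condition $U^\ddagger U=e$ naively on $\cl_{p,q-1}$, the case analysis comes out wrong: e.g.\ for $(p,q)=(2,1)$ one would land in the $q'=0$ compact case and get $\u(2)$, whereas the correct answer is $\u(1,1)$. A related minor imprecision: for even $n$ the restriction of the irreducible representation $\gamma$ to $\cl^{(0)}_{p,q}$ is not itself the irreducible representation of $\cl_{p,q-1}$ of dimension $2^{\frac{n-1}{2}}$ (it is reducible, since $\C\otimes\cl_{p,q-1}\cong\Mat(2^{\frac{n-2}{2}},\C)\oplus\Mat(2^{\frac{n-2}{2}},\C)$), so one should pass to the representation of $\cl_{p,q-1}$ furnished by Theorem \ref{theoremMatrPr} rather than literally restricting $\gamma$. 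Once the transported involution is identified correctly and the case conditions are translated back from $(p,q-1)$ to $(p,q)$, your argument gives the stated result.
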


\begin{proof} Let us prove the following Lie group isomorphisms
\begin{eqnarray}
\G^{23i01}_{p,q}\cong\left\lbrace
\begin{array}{ll}
\U(2^{\frac{n}{2}}), & \mbox{\rm if $p$ is even and $q=0$;}\\
\U(2^{\frac{n-1}{2}})\times \U(2^{\frac{n-1}{2}}), & \mbox{\rm if $p$ is odd and $q=0$;}\\
\U(2^{\frac{n-2}{2}},2^{\frac{n-2}{2}}), & \mbox{\rm if $n$ is even and $q\neq 0$;}\\
\U(2^{\frac{n-3}{2}},2^{\frac{n-3}{2}})\times \U(2^{\frac{n-3}{2}},2^{\frac{n-3}{2}}), & \mbox{\rm if $p$ is odd and $q\neq 0$ is even;}\\
\GL(2^{\frac{n-1}{2}}, \C), & \mbox{\rm if $p$ is even and $q$ is odd.}
\end{array}
\right.\nonumber
\end{eqnarray}

In the first two cases ($q=0$), using definition of the group $\G^{23i01}_{p,q}$ (see Table~\ref{table1}) and formulas (\ref{sogldager2}), from $U^\ddagger U=e$  we obtain $U^\dagger U=e$ and an isomorphism with unitary group.

Now we consider the cases $q\geq 1$.
Let $n$ be even. If $p$ and $q$ are odd, then
\begin{eqnarray}
U^\dagger= e_{1\ldots p}U^\ddagger
(e_{1\ldots p})^{-1}\quad \Rightarrow \quad U^\dagger e_{1\ldots p}
U=e_{1\ldots p} U^\ddagger U=e_{1\ldots p}.\nonumber
\end{eqnarray}

We use the first statement of Theorem \ref{theoremMatrPr}. Since $(\gamma_a)^\dagger=\eta_{aa}\gamma_a$, it follows that $\gamma(U^\dagger)=\gamma^\dagger(U)$. We obtain $V^\dagger J V = J$, where $V\in\Mat(2^{\frac{n}{2}},\C)$ and an isomorphism with $\U(2^{\frac{n-2}{2}},2^{\frac{n-2}{2}})$.

In the case of even $p$ and $q$, we have
\begin{eqnarray}
U^\dagger=
e_{p+1\ldots n}U^\ddagger (e_{p+1\ldots n})^{-1}\quad \Rightarrow \quad
U^\dagger e_{p+1\ldots n} U=e_{p+1\ldots n} U^\ddagger
U=e_{p+1\ldots n}.\label{pq2}
\end{eqnarray}
Then we use the second statement of Theorem \ref{theoremMatrPr}.

In the case of odd $p$ and even $q\neq 0$, we have again (\ref{pq2}). We use the forth statement of Theorem \ref{theoremMatrPr}. Every Clifford algebra element has a matrix representation $\diag(R,S)$ with blocks $R$ and $S$ of the same size. We have
$$(\diag(R,S))^\dagger \diag(J, J) \diag(R,S)=\diag(J, J) \Rightarrow R^\dagger J R=J,\, S^\dagger J S=J$$
and obtain an isomorphism with direct sum of two pseudo-unitary groups.

Let us consider the case of even $p$ and odd $q$. If $p\neq 0$, then $$U^\dagger=
e_{1\ldots p}\hat{U}^{\ddagger} (e_{1\ldots p})^{-1}\quad \Rightarrow \quad
\hat{U}^{\dagger} e_{1\ldots p} U=e_{1\ldots p} U^{\ddagger}
U=e_{1\ldots p}.$$
We use the matrix representation $\gamma$ from the third statement of Theorem \ref{theoremMatrPr}. Moreover, we use the fact that $\gamma_a$ are block-diagonal matrices with two blocks that differ in sign. Let the even part of arbitrary element has matrix representation $\diag(A,A)$ and its odd part has matrix representation $\diag(B,-B)$. We obtain
$$(\diag(A-B,A+B))^\dagger \diag(J, J) \diag(A+B,A-B)=\diag(J, J).$$
Equivalently, $(A-B)^\dagger J(A+B)=J$. Therefore, we have $S^\dagger JR=J$ for $R=A+B$ and $S=A-B$. For every matrix $S\in\GL(2^{\frac{n-1}{2}},\C)$ there exists matrix $R=J(S^\dagger)^{-1}J$. We obtain an isomorphism with linear group.

In the cases of signatures $(0,n)$, where $n$ is odd, we use $U^\dagger=\hat{U}^{\ddagger}$ and obtain $\hat{U}^{\dagger}U=e$. Therefore
$$(\diag(A-B,A+B))^\dagger \diag(A+B,A-B)={\bf 1}$$ and $(A-B)^\dagger (A+B)={\bf 1}$. We obtain $S^\dagger R={\bf 1}$ and an isomorphism with linear group again.

We have $\G^{12i03}_{p,q}\cong \G^{23i01}_{q,p}$. To obtain this isomorphism we must change the basis $e_a \to i e_a$, $a=1, \ldots, n$. Note that after this transformation of basis the operation $\tilde{}$ does not change, but the operation $\bar{}$ changes to $\bar{\hat{}}$. Therefore, the operation ${}^\ddagger=\tilde{\bar{}}$ changes to ${\hat{}}^\ddagger=\tilde{\bar{\hat{}}}$ (see definitions of the groups $\G^{12i03}_{p,q}$ and $\G^{23i01}_{p,q}$).

We obtain the following Lie group isomorphisms
\begin{eqnarray}
\G^{12i03}_{p,q}\cong\left\lbrace
\begin{array}{ll}
\U(2^{\frac{n}{2}}), & \mbox{\rm if $p=0$ and $q$ is even;}\\
\U(2^{\frac{n-1}{2}})\times \U(2^{\frac{n-1}{2}}), & \mbox{\rm if $p=0$ and $q$ is odd;}\\
\U(2^{\frac{n-2}{2}},2^{\frac{n-2}{2}}), & \mbox{\rm if $n$ is even and $p\neq 0$;}\\
\U(2^{\frac{n-3}{2}},2^{\frac{n-3}{2}})\times \U(2^{\frac{n-3}{2}},2^{\frac{n-3}{2}}), & \mbox{\rm if $p\neq 0$ is even and $q$ is odd;}\\
\GL(2^{\frac{n-1}{2}}, \C), & \mbox{\rm if $p$ is odd and $q$ is even.}
\end{array}
\right.\nonumber
\end{eqnarray}

We have $\G^{2i0}_{p,q}\cong \G^{12i03}_{p,q-1}\cong \G^{12i03}_{q,p-1}$. To obtain these isomorphisms we must change the basis $e_a \to e_a e_n$, $a=1, \ldots, n-1$, $(e_n)^2=-e$.

We obtain the following Lie group isomorphisms
\begin{eqnarray}
\G^{2i0}_{p,q}\cong\left\lbrace
\begin{array}{ll}
\U(2^{\frac{n-1}{2}}), & \mbox{\rm if $(n,0)$ or $(0,n)$, where $n$ is odd;}\\
\U(2^{\frac{n-2}{2}})\times \U(2^{\frac{n-2}{2}}), & \mbox{\rm if $(n,0)$ or $(0,n)$, when $n$ is even;}\\
\U(2^{\frac{n-3}{2}},2^{\frac{n-3}{2}}), & \mbox{\rm if $n$ is odd, $p\neq 0$, and $q\neq 0$;}\\
\U(2^{\frac{n-4}{2}},2^{\frac{n-4}{2}})\times \U(2^{\frac{n-4}{2}},2^{\frac{n-4}{2}}), & \mbox{\rm if $p\neq 0$ and $q\neq 0$ are even;}\\
\GL(2^{\frac{n-2}{2}},\C), & \mbox{\rm if $p$ and $q$ are odd.}
\end{array}\nonumber
\right.
\end{eqnarray}
Note that $\G^{2i0}_{p,q}\cong\G^{2i0}_{q,p}.$

Using isomorphisms of Lie groups we obtain isomorphisms of the corresponding Lie algebras.
\end{proof}

\section{Lie algebras $\overline{\textbf{23}}\oplus i\overline{\textbf{23}}$, $\overline{\textbf{12}}\oplus i\overline{\textbf{12}}$, $\overline{\textbf{2}}\oplus i\overline{\textbf{2}}$}

Let us consider the Lie algebras $\overline{\textbf{23}}\oplus i\overline{\textbf{23}}$, $\overline{\textbf{12}}\oplus i\overline{\textbf{12}}$, $\overline{\textbf{2}}\oplus i\overline{\textbf{2}}$ with numbers 9-11 in Table \ref{table1} (they are green in Figure \ref{figure1}).

\begin{theorem} \label{theoremOrtCliff} We have the following Lie algebra isomorphisms
\begin{eqnarray}
\overline{\textbf{23}}\oplus i\overline{\textbf{23}}&\cong&\left\lbrace
\begin{array}{ll}
\so(2^{\frac{n}{2}},\C), & \mbox{\rm if $n=0, 2\mod 8$;}\\
\sp(2^{\frac{n-2}{2}},\C), & \mbox{\rm if $n=4, 6\mod 8$;}\\
\so(2^{\frac{n-1}{2}},\C)\oplus \so(2^{\frac{n-1}{2}},\C), & \mbox{\rm if $n=1\mod 8$;}\\
\sp(2^{\frac{n-3}{2}},\C)\oplus \sp(2^{\frac{n-3}{2}},\C), & \mbox{\rm if $n=5\mod 8$;}\\
\gl(2^{\frac{n-1}{2}}, \C), & \mbox{\rm if $n=3, 7\mod 8$,}
\end{array}
\right.\\
\overline{\textbf{12}}\oplus i\overline{\textbf{12}}&\cong&\left\lbrace
\begin{array}{ll}
\so(2^{\frac{n}{2}},\C), & \mbox{\rm if $n=0, 6\mod 8$;}\\
\sp(2^{\frac{n-2}{2}},\C), & \mbox{\rm if $n=2, 4\mod 8$;}\\
\so(2^{\frac{n-1}{2}},\C)\oplus \so(2^{\frac{n-1}{2}},\C), & \mbox{\rm if $n=7\mod 8$;}\\
\sp(2^{\frac{n-3}{2}},\C)\oplus \sp(2^{\frac{n-3}{2}},\C), & \mbox{\rm if $n=3\mod 8$;}\\
\gl(2^{\frac{n-1}{2}}, \C), & \mbox{\rm if $n=1, 5\mod 8$,}
\end{array}
\right.\\
\overline{\textbf{2}}\oplus i\overline{\textbf{2}}&\cong&\left\lbrace
\begin{array}{ll}
\so(2^{\frac{n-1}{2}},\C), & \mbox{\rm if $n=1, 7\mod 8$;}\\
\sp(2^{\frac{n-3}{2}},\C), & \mbox{\rm if $n=3, 5\mod 8$;}\\
\so(2^{\frac{n-1}{2}},\C)\oplus \so(2^{\frac{n-1}{2}},\C), & \mbox{\rm if $n=0\mod 8$;}\\
\sp(2^{\frac{n-3}{2}},\C)\oplus \sp(2^{\frac{n-3}{2}},\C), & \mbox{\rm if $n=4\mod 8$;}\\
\gl(2^{\frac{n-2}{2}}, \C), & \mbox{\rm if $n=2, 6\mod 8$.}
\end{array}
\right.
\end{eqnarray}
\end{theorem}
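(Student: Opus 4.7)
The plan is to reduce all three groups to the case $\C\otimes\cl_{n,0}$, translate the defining involution condition into a classical matrix-group equation via Theorem~\ref{theoremMatrPr}, and then pass to the Lie algebra. The substitution $e_a \to i e_a$ for $a = p+1,\ldots,n$ gives a complex-algebra isomorphism $\C\otimes\cl_{p,q} \cong \C\otimes\cl_{n,0}$ that preserves both the reversion $\tilde{}$ and the grade involution $\hat{}$ (these depend only on the order of the generators in a product, not on their squares). Hence each of $\G^{23i23}_{p,q}$, $\G^{12i12}_{p,q}$, $\G^{2i2}_{p,q}$ depends only on $n$, which is consistent with the theorem statement involving only $n \bmod 8$.

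With this reduction in hand, I would apply Theorem~\ref{theoremMatrPr} with $(p,q)=(n,0)$ to get a faithful irreducible representation $\gamma$ of $\C\otimes\cl_{n,0}$. Since both $U \mapsto \gamma(\tilde U)$ and $V \mapsto V^T$ are anti-automorphisms of the image algebra, Schur's lemma produces an invertible intertwiner $E$ with $\gamma(\tilde U) = E^{-1}\gamma(U)^T E$ for every $U$. Applying the reversion twice forces $E^T = cE$ with $c^2 = 1$. The defining condition $\tilde{U}U = e$ then becomes $\gamma(U)^T E \gamma(U) = E$. If $E$ is symmetric ($c = 1$) the solution set is the complex orthogonal group $\O(N,\C)$, whose Lie algebra is $\so(N,\C)$; if $E$ is antisymmetric ($c = -1$) it is the complex symplectic group $\Sp(N/2,\C)$, whose Lie algebra is $\sp(N/2,\C)$.

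For odd $n$ the image of $\gamma$ is $\Mat(2^{(n-1)/2},\C) \oplus \Mat(2^{(n-1)/2},\C)$, and the analysis splits according to whether reversion preserves or swaps the two central idempotents $\frac{1 \pm \omega}{2}$ built from the normalized pseudoscalar $\omega$. Using $\tilde{\omega} = (-1)^{n(n-1)/2}\omega$, I obtain preservation for $n \equiv 1 \pmod 4$ (yielding a product of two complex orthogonal or two complex symplectic factors) and swapping for $n \equiv 3 \pmod 4$ (yielding a single $\GL$, since the equation $\tilde U U = e$ then determines the second block as a function of the first, as in the treatment of $\G^{12i03}_{p,q}$ in the previous section). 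For $\G^{12i12}_{p,q}$ the identical argument applies with $\hat{\tilde{}}$ in place of $\tilde{}$; the extra grade-involution sign $(-1)^k$ on $\cl^k_{p,q}$ shifts the symmetry computation by a fixed amount and accounts for the two-step offset between the $\bmod 8$ tables. For $\G^{2i2}_{p,q}$ I restrict to $\C \otimes \cl^{(0)}_{p,q} \cong \C\otimes\cl_{n-1,0}$ (the isomorphism already used in Section~3) and apply the preceding cases with $n-1$ in place of $n$.

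The main obstacle is the explicit determination of the sign $c = \pm 1$ (and its $\hat{\tilde{}}$-analogue) as a function of $n \bmod 8$. I would handle this by induction along the recursive construction of $\gamma$ from Theorem~\ref{theoremMatrPr}: compute $E$ directly in the base cases $n = 1,2$, then track how it transforms under the two passages $n \to n+1$ (adjoining a pseudoscalar block) and $n+1 \to n+2$ (adjoining the off-diagonal block). This produces an eight-step periodic table of (symmetric / antisymmetric / split / swapped) outcomes; once it is in hand, matching cases with the classical matrix groups gives all the stated Lie group isomorphisms, and the Lie algebra isomorphisms follow immediately by differentiation at the identity.
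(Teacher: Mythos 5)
Your proposal is correct in outline but follows a genuinely different route from the paper. Where you invoke Schur's lemma to produce an abstract intertwiner $E$ with $\gamma(\tilde U)=E^{-1}\gamma(U)^T E$, $E^T=cE$, $c^2=1$, and then propose to pin down $c$ by induction along the recursive construction of the representation, the paper exhibits the intertwiner explicitly: it is the basis element $e_{b_1\ldots b_k}$ (the product of the generators represented by symmetric matrices) or $e_{c_1\ldots c_l}$ (the skew-symmetric ones), via the identities (\ref{sogltransp}) relating $U^\T$ to $\tilde U$ and $\hat{\tilde U}$; the sign you would compute inductively is instead read off from the square $(-1)^{k(k-1)/2}$ or $(-1)^{l(l+1)/2}$ of that element together with the table of admissible ``additional signatures'' $(k\bmod 4,\ l\bmod 4)$ as a function of $n\bmod 8$, imported from \cite{Pauli2}. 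Your preliminary reduction to $(p,q)=(n,0)$ via $e_a\to ie_a$ is sound (these three groups are cut out by $\tilde{}$ and $\hat{}$ alone, which are $\C$-linear and commute with that substitution) and is a clean simplification the paper does not make, and your odd-$n$ analysis via the action of reversion on the central idempotents matches the paper's block-diagonal argument. One step needs care: under the isomorphism $\C\otimes\cl^{(0)}_{p,q}\cong\C\otimes\cl_{n-1,0}$ given by $e_a\mapsto e_a e_n$, reversion on the even subalgebra corresponds to $\hat{\tilde{U}}$, not $\tilde{U}$, on $\cl_{n-1,0}$ (since $\widetilde{e_a e_n}=e_n e_a=-e_a e_n$), so $\G^{2i2}_{p,q}$ must be matched with the $\G^{12i12}$ case in dimension $n-1$ rather than the $\G^{23i23}$ case; choosing the wrong one of your ``preceding cases'' would put the orthogonal outcome at $n\equiv 1,3$ instead of the correct $n\equiv 1,7\bmod 8$.
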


\begin{proof} Let us prove the following Lie group isomorphisms
\begin{eqnarray}
\G^{23i23}_{p,q}\cong\left\lbrace
\begin{array}{ll}
\O(2^{\frac{n}{2}},\C), & \mbox{\rm if $n=0, 2\mod 8$;}\\
\Sp(2^{\frac{n-2}{2}},\C), & \mbox{\rm if $n=4, 6\mod 8$;}\\
\O(2^{\frac{n-1}{2}},\C)\times \O(2^{\frac{n-1}{2}},\C), & \mbox{\rm if $n=1\mod 8$;}\\
\Sp(2^{\frac{n-3}{2}},\C)\times \Sp(2^{\frac{n-3}{2}},\C), & \mbox{\rm if $n=5\mod 8$;}\\
\GL(2^{\frac{n-1}{2}}, \C), & \mbox{\rm if $n=3, 7\mod 8$.}
\end{array}
\right.\label{is4}
\end{eqnarray}

To prove these Lie group isomorphisms we need the notion of additional signature of $\C\otimes\cl_{p,q}$ suggested by the author in \cite{Pauli2}.

Suppose we have the faithful and irreducible matrix representation $\beta$ over $\C$ or $\C\oplus\C$ of complexified Clifford algebra. We can always use such matrix representation in which all matrices $\beta_a=\beta(e_a)$ are symmetric or skew-symmetric. Let $k$ be the number of symmetric matrices among $\{ \beta_a, a=1, \ldots, n\} $ for the matrix representation $\beta$, and $l$ be the number of skew-symmetric matrices among $\{ \beta_a, a=1, \ldots, n \}$. Let $e_{b_1},\ldots, e_{b_k}$ denote the generators for which the matrices are symmetric. Similarly, we have $e_{c_1},\ldots, e_{c_l}$ for the skew-symmetric matrices.

We use the notion of additional signature of Clifford algebra when we study the relation between matrix representation and operations of conjugation. In complexified Clifford algebra, we have (see \cite{Pauli2})
\begin{eqnarray}
U^\T&=&(e_{b_1 \ldots b_k})^{-1} \tilde{U} e_{b_1\ldots b_k},\qquad \mbox{$k$ is odd;}\nonumber\\
U^\T&=&(e_{b_1 \ldots b_k})^{-1} \tilde{\hat{U}} e_{b_1\ldots b_k},\qquad \mbox{$k$ is even;}\label{sogltransp}\\
U^\T&=&(e_{c_1 \ldots c_l})^{-1} \tilde{U}e_{c_1\ldots c_l},\qquad \mbox{$l$  is even;}\nonumber\\
U^\T&=&(e_{c_1 \ldots c_l})^{-1} \tilde{\hat{U}} e_{c_1\ldots c_l},\qquad \mbox{$l$ is odd,}\nonumber
\end{eqnarray}
where $U^\T=\beta^{-1}((\beta(U))^T)$, and $(\beta(U))^T$ is the transpose of matrix $\beta(U)$.

Numbers $k$ and $l$ depend on the matrix representation $\beta$. But they can take only certain values despite dependence on the matrix representation.

In \cite{Pauli2}, we proved that in a complexified Clifford algebra we have only the following possible values of additional signature as in Table \ref{table2}.

\begin{table}
\tbl{Possible values of additional signature of $\C\otimes\cl_{p,q}$.}
{\begin{tabular}{cc}\toprule
$n\mod 8$& $(k\mod 4, l\mod 4)$\\ \midrule
$0$ & $(0,0),\,(1,3)$  \\
$1$ & $(1,0)$   \\
$2$ & $(1,1),\,(2,0)$  \\
$3$ & $(2,1)$   \\
$4$ & $(3,1),\,(2,2)$   \\
$5$ & $(3,2)$   \\
$6$ & $(3,3),\,(0,2)$   \\
$7$ & $(0,3)$   \\    \bottomrule
\end{tabular}}
\label{table2}
\end{table}

We use the following notation from \cite{Pauli2}. Denote by $[kq]$ the number of symmetric matrices in $\{\beta_a, a=p+1, \ldots, n\}$. Note that this number equals the number of all symmetric and purely imaginary matrices at the same time in $\{\beta_a, a=1,\ldots, n\}$. Denote by $[lp]$ the number of skew-symmetric matrices in $\{\beta_a, a=1,\ldots, p\}$. Note that this number equals the number of all skew-symmetric and purely imaginary at the same time in $\{\beta_a, a=1,\ldots, n\}$. Denote by $[lq]$ the number of skew-symmetric matrices in $\{\beta_a, a=p+1, \ldots, n\}$. This number equals the number of all skew-symmetric and real matrices at the same time in $\{\beta_a, a=1, \ldots, n\}$. Denote by $[kp]$ the number of symmetric matrices in $\{\beta_a, a=1, \ldots, p\}$. Note that this number equals the number of all symmetric and real matrices at the same time in $\{\beta_a, a=1,\ldots, n\}$. We have $n=[kp]+[lp]+[kq]+[lq]$.

Let us return to the proof of Lie group isomorphisms (\ref{is4}). We denote by $\Omega$ the block matrix $$\Omega=\left( \begin{array}{ll}
 0 & -{\bf1} \\
 {\bf1} & 0 \end{array}\right).$$
We use the faithful and irreducible matrix representation $\beta: e_a \to \beta_a$ of $\C\otimes\cl_{p,q}$ from the proof of Theorem \ref{theoremMatrPr} (we can use the matrix representation $\gamma$ from the statement of Theorem \ref{theoremMatrPr} too).

Let us consider the case of even $n$. Let $k$ and $l$ be odd. From (\ref{sogltransp}) we have $U^\T=(e_{b_1 \ldots b_k})^{-1} \tilde{U} e_{b_1\ldots b_k}$. Thus, we obtain for elements of the group $\G^{23i23}_{p,q}$ the condition $U^\T e_{b_1\ldots b_k} U = e_{b_1\ldots b_k}$. Let us consider a real matrix $M=\beta_{b_1 \ldots b_k}$ (in the case of even $[kq]$) or $M=i\beta_{b_1 \ldots b_k}$ (in the case of odd $[kq]$). We have $M^\dagger=M^{-1}$, $\tr M=0$,
$$M^2=(-1)^{[kq]}(\beta_{b_1 \ldots b_k})^2= (-1)^{\frac{k(k-1)}{2}+[kq]+[kq]}{\bf 1}=(-1)^{\frac{k(k-1)}{2}}{\bf 1}.$$
Therefore there exists an orthogonal matrix $T^T=T^{-1}$ such that $T^{-1}M T$ equals $J$ in the case $k=1 \mod 4$ (in the cases $n=0, 2\mod 8$ by Table \ref{table2}) or equals $\Omega$ in the case $k=3 \mod 4$ (in the cases $n=4, 6\mod 8$ by Table \ref{table2}). In both cases, we use transformation $\zeta_a=T^{-1} \beta_a T$ and obtain another matrix representation $\zeta$ such that $\zeta_{b_1 \ldots b_k}$ equals $J$, $iJ$, or $\Omega$, $i\Omega$. Using the fact that $T$ is orthogonal, we conclude that the matrices $\zeta_a$ and $\beta_a$ are both symmetric or antisymmetric for all $a=1, \ldots, n$. Therefore we have the same formulas about the  connection between operations ${}^\T$ and $\tilde{}$. Now we use the matrix representation $\zeta$ and obtain $U^\T J U = J$ in the cases $n=0, 2\mod 8$ or $U^\T \Omega U = \Omega$ in the cases $n=4, 6\mod 8$. We obtain isomorphisms with $\O(2^{\frac{n}{2}},\C)$ (because $\O(a,b,\C)\cong\O(a+b,\C)$) or $\Sp(2^{\frac{n-2}{2}},\C)$.

In the case of even $k$ and even $l\neq 0$, we use $U^\T=(e_{c_1 \ldots c_l})^{-1} \tilde{U} e_{c_1\ldots c_l}$. In the same way, we choose the real matrix $M=\beta_{c_1 \ldots c_l}$ (in the case of even $[lp]$) or $M=i\beta_{c_1 \ldots c_l}$ (in the case of odd $[lp]$). We have $$M^2=(-1)^{[lp]}(-1)^{\frac{l(l-1)}{2}+[lq]}{\bf 1}=(-1)^{\frac{l(l-1)}{2}+l}{\bf 1}=(-1)^{\frac{l(l+1)}{2}}{\bf 1}.$$
Therefore, we obtain an isomorphism with $\O(2^{\frac{n}{2}},\C)$ in the case $l=0 \mod 4$ ($n=0, 2\mod 8$) or an isomorphism with $\Sp(2^{\frac{n-2}{2}},\C)$ in the case $l=3\mod 4$ ($n=4, 6\mod 8$).

In the case of even $k$ and $l=0$ (the cases $n=0, 2\mod 8$), we obtain $U^\T=\tilde{U}$ and an isomorphism with $\O(2^{\frac{n}{2}},\C)$.

Let us consider the case of odd $n$. In the case of odd $k$ and $l=0$ ($n=1\mod 8$), we obtain $U^\T=\tilde{U}$ and an isomorphism with $\O(2^{\frac{n-1}{2}},\C)\times \O(2^{\frac{n-1}{2}},\C)$. Let $k$ be odd and $l\neq 0$ be even (the cases $n=1, 5\mod 8$). We use $U^\T=(e_{c_1 \ldots c_l})^{-1} \tilde{U} e_{c_1\ldots c_l}$. Let us consider the matrix $M=\beta_{c_1 \ldots c_l}$ (in the case of even $[lp]$) or the matrix $M=i\beta_{c_1 \ldots c_l}$ (in the case of odd $[lp]$). This matrix consists of two identical blocks $D$: $M=\diag(D,D)$. We have $M^\dagger=M^{-1}$, $\tr M=0$, and $M^2=(-1)^{[lp]}(-1)^{\frac{l(l-1)}{2}+[lq]}{\bf 1}=(-1)^{\frac{l(l+1)}{2}}{\bf 1}$. We obtain an isomorphism with $\O(2^{\frac{n-1}{2}},\C)\times \O(2^{\frac{n-1}{2}},\C)$ in the case $l=0\mod 4$ ($n=1\mod 8$) or an isomorphism with $\Sp(2^{\frac{n-3}{2}},\C)\times \Sp(2^{\frac{n-3}{2}},\C)$ in the case $l=2\mod 4$ ($n=5\mod 8$).

In the case of even $k$ and odd $l$ ($n=3, 7\mod 8$), we use $U^\T=(e_{b_1 \ldots b_k})^{-1} \hat{\tilde{U}} e_{b_1\ldots b_k}$. Similarly, we obtain $\hat{U}^{\T} J U=J$ or $\hat{U}^{\T} \Omega U=\Omega$. Moreover, we use the fact that $\beta_a$ are block-diagonal matrices with two blocks that differ in sign. The same is true for the matrices $\zeta_a=T^{-1}\beta_a T$ because the matrix $T$ is block-diagonal. Let the even part of arbitrary element has matrix representation $\diag(A,A)$ and its odd part has matrix representation $\diag(B,-B)$. Then
$$\diag(A-B, A+B)^\T \diag(J,J)\diag(A+B, A-B)=\diag(J,J),$$
or, equivalently, $(A-B)^\T J (A+B)=J,$
(or the same equation, where $J$ changes to $\Omega$). In both cases, we obtain an isomorphism with $\GL(2^{\frac{n-1}{2}}, \C)$.

In the case of $k=0$ and odd $l$, we similarly obtain $\hat{U}^{\T} U={\bf 1}$. Then
$$\diag(A-B, A+B)^\T \diag(A+B, A-B)={\bf 1}$$ and
$(A-B)^\T (A+B)={\bf 1}$. We obtain an isomorphism with $\GL(2^{\frac{n-1}{2}}, \C)$.
We can similarly obtain the following Lie group isomorphisms
\begin{eqnarray}
\G^{12i12}_{p,q}\cong\left\lbrace
\begin{array}{ll}
\O(2^{\frac{n}{2}},\C), & \mbox{\rm if $n=0, 6\mod 8$;}\\
\Sp(2^{\frac{n-2}{2}},\C), & \mbox{\rm if $n=2, 4\mod 8$;}\\
\O(2^{\frac{n-1}{2}},\C)\times \O(2^{\frac{n-1}{2}},\C), & \mbox{\rm if $n=7\mod 8$;}\\
\Sp(2^{\frac{n-3}{2}},\C)\times \Sp(2^{\frac{n-3}{2}},\C), & \mbox{\rm if $n=3\mod 8$;}\\
\GL(2^{\frac{n-1}{2}}, \C), & \mbox{\rm if $n=1, 5\mod 8$.}
\end{array}
\right.\nonumber
\end{eqnarray}
We have
$$\G^{12i12}_{p_1, q_1}\cong  \G^{2i2}_{p_2, q_2},\qquad p_1+q_1+1=p_2+q_2.$$
To obtain this isomorphism we must change the basis $e_a \to e_a e_{n}$, $a=1, 2, \ldots, n-1$.

Therefore we have the following Lie group isomorphisms
\begin{eqnarray}
\G^{2i2}_{p,q}\cong\left\lbrace
\begin{array}{ll}
\O(2^{\frac{n-1}{2}},\C), & \mbox{\rm if $n=1, 7\mod 8$;}\\
\Sp(2^{\frac{n-3}{2}},\C), & \mbox{\rm if $n=3, 5\mod 8$;}\\
\O(2^{\frac{n-1}{2}},\C)\times \O(2^{\frac{n-1}{2}},\C), & \mbox{\rm if $n=0\mod 8$;}\\
\Sp(2^{\frac{n-3}{2}},\C)\times \Sp(2^{\frac{n-3}{2}},\C), & \mbox{\rm if $n=4\mod 8$;}\\
\GL(2^{\frac{n-2}{2}}, \C), & \mbox{\rm if $n=2, 6\mod 8$.}
\end{array}
\right.\label{sodspin}
\end{eqnarray}

Using Lie group isomorphisms we obtain isomorphisms of the corresponding Lie algebras.
\end{proof}

Note that the groups $G^{23i23}_{p,q}$, $G^{12i12}_{p,q}$, $G^{2i2}_{p,q}$ and the correponding Lie algebras $\overline{\textbf{23}}\oplus i\overline{\textbf{23}}$, $\overline{\textbf{12}}\oplus i\overline{\textbf{12}}$, $\overline{\textbf{2}}\oplus i\overline{\textbf{2}}$ depend only on $n=p+q$. However, the groups from the previous section $G^{12i03}_{p,q}$, $G^{23i01}_{p,q}$, $G^{2i0}_{p,q}$ and the corresponding Lie algebras $\overline{\textbf{12}}\oplus i\overline{\textbf{03}}$, $\overline{\textbf{23}}\oplus i\overline{\textbf{01}}$, $\overline{\textbf{2}}\oplus i\overline{\textbf{0}}$ depend on $p$ and depend on $q$. They change after the transformation $e_a \to ie_a$.

\section{Relation between $G^{2i2}_{p,q}$ and complex spin groups $\Spin(n,\C)$}

Let us consider the {\it complex spin groups}
$$\Spin(n,\C)=\{U\in(\C\otimes\cl^{(0)}_{p,q})^\times \,|\, \forall x\in\cl^1_{p,q}\quad U^{-1}xU\in\cl^1_{p,q},\, \tilde{U}U=e\}.$$
These groups are subgoups of the groups $\G^{2i2}_{p,q}$, $\G^{12i12}_{p,q}$, $\G^{23i23}_{p,q}$, $(\C\otimes\cl^{(0)}_{p,q})^\times$, and $(\C\otimes\cl_{p,q})^\times$.

The groups $\Spin(n,\C)$ are double covers of $\SO(n,\C)$ (see \cite{Lounesto}). It is well-known that $\Spin(n,\C)$ is isomorphic to the following classical matrix Lie groups (see \cite{Lounesto})
\begin{eqnarray}
\Spin(n,\C)\cong\left\lbrace
\begin{array}{ll}
\{\pm 1\}, & \mbox{\rm if $n=0$;}\\
\O(1,\C), & \mbox{\rm if $n=1$;}\\
\GL(1, \C), & \mbox{\rm if $n=2$;}\\
\Sp(2,\C), & \mbox{\rm if $n=3$;}\\
\Sp(2,\C)\times \Sp(2,\C), & \mbox{\rm if $n=4$;}\\
\Sp(4,\C), & \mbox{\rm if $n=5$;}\\
\SL(4, \C), & \mbox{\rm if $n=6$.}
\end{array}
\right.\nonumber
\end{eqnarray}

Note that $\Spin(n,\C)$ coincides with $\G^{2i2}_{p,q}$ in the cases $n\leq 5$. In the case $n=6$, the condition $\forall x\in\cl^1_{p,q}$,\, $U^{-1}xU\in\cl^1_{p,q}$ from the definition of $\Spin(6,\C)$ leads to the condition $\det\gamma(U)=1$ for the matrix representation $\gamma$ and we obtain $\SL(4,\C)$ (not $\GL(4,\C)$ as for the group $\G^{2i2}_{p,q}$).

Note that in the cases $n\geq 6$ $\Spin(n,\C)$ is a subgroup of $\G^{2i2}_{p,q}$. Therefore we know the classical matrix groups (\ref{sodspin}) that contain $\Spin(n,\C)$ in the cases $n\geq 6$.

\section*{Acknowledgements}

The author thanks the Referees for their constructive remarks and comments.

\section*{Funding}

The article was prepared within the framework of the Academic Fund Program at the National Research University Higher School of Economics (HSE) in 2017-2018 (grant 17-01-0009) and by the Russian Academic Excellence Project ``5-100''.

%References should be listed at the end of the main text in the order in which they are first cited in the text.

\end{document}